\newtheorem{theorem}{Theorem}[section]
\newtheorem{lemma}[theorem]{Lemma}
\newtheorem{proposition}[theorem]{Proposition}
\newtheorem{corollary}[theorem]{Corollary}
\theoremstyle{definition}
\newtheorem{definition}[theorem]{Definition}
\newtheorem{example}[theorem]{Example}
\theoremstyle{remark}
\numberwithin{equation}{section}
\newcommand{\Pre}{\mathrm{Pre}}
\newcommand{\iso}{\mathrm{iso}}
\begin{document}

\begin{abstract}                          % Abstract of not more than 200 words.
Finite state machines are widely used as a sound mathematical formalism that appropriately describes large scale, distributed and complex systems. 
Multiple interactions of finite state machines in complex systems are well captured by the notion of non--flat systems. Non--flat systems are "finite state machines" where each "state" can be either a basic state or an aggregate of finite state machines. 
By expanding a non--flat system, a flat system is obtained which is an ordinary finite state machine. 
In this paper we introduce a novel class of non--flat systems called Arena of Finite State Machines (AFSM). AFSMs are collections of finite state machines that interact concurrently through a communication network. We propose a notion of equivalence, termed compositional bisimulation, that allows the complexity reduction of AFSMs without the need of expanding them to the corresponding FSMs. The computational complexity gain obtained from this approach is formally quantified in the paper. An application of the proposed framework to the regulation of gene expression in the bacterium \textit{Escherichia coli} is also presented.
\end{abstract}

\title[Arenas of Finite State Machines]{Arenas of Finite State Machines}
\thanks{The research leading to these results has been partially supported by the Center of Excellence DEWS and received funding from the European Union Seventh Framework Programme [FP7/2007-2013] under grant agreement n.257462 HYCON2 Network of excellence.}

\author[Giordano Pola, Maria D. Di Benedetto and Elena De Santis]{
Giordano Pola$^{1}$, Maria D. Di Benedetto$^{1}$ and Elena De Santis$^{1}$}
\address{$^{1}$
Department of Electrical and Information Engineering, Center of Excellence DEWS,
University of L{'}Aquila, Poggio di Roio, 67040 L{'}Aquila, Italy}
\email{ \{giordano.pola,mariadomenica.dibenedetto,elena.desantis\}@univaq.it}

\maketitle

\section{Introduction}
Finite state machines (FSMs) are widely used in modeling complex systems ranging from computer and communication networks, automated manufacturing systems, air traffic management systems, distributed software systems, among many others, see e.g. \cite{cassandras,ModelChecking}. 
%Finite state machines have been also employed as a sound mathematical paradigm to describe purely continuous and hybrid systems in the context of the so--called correct--by--design embedded control software, see e.g. \cite{DiscAbs,paulo,GirardTAC2010,Belta:06} and the references therein. 
The increasing complexity of large scale systems demanded during the years for formal methods that can render their analysis tractable from a computational complexity point of view. Several approaches have been proposed in the literature, which include abstraction, modular verification methods, symmetry and partial order reduction, see e.g. \cite{ModelChecking}. The common goal of these approaches is to find an FSM that is equivalent to the original one, but with a set of states of smaller size. 
In this paper we follow the approach by Alur and co--workers (see e.g. \cite{HFSM1,CHSM}), where a complex system is viewed as a "non--flat" system. 
A non--flat system is a "finite state machine" where each "state" can be either a basic state or a superstate \cite{statecharts} that hides inside an FSM or even a composition of FSMs. By expanding the superstates of a non--flat system to their corresponding FSMs an ordinary FSM is obtained. One of the early non--flat systems that appeared in the literature are \textit{hierarchical state machines} (HSMs) \cite{HFSM1}. 
%Reachability problems for HSMs can be studied with time complexity that scales linearly with the size of HSMs. 
While HSMs well capture modeling features of many design languages as for example Statecharts \cite{statecharts}, they only consider sequential interaction among the FSMs involved. \textit{Recursive state machines} (RSMs) \cite{RFSM} extend HSMs by allowing recursion in the sequential interaction of FSMs. As such, they well model sequential programming languages with recursive procedure calls. 
%Reachability and cycle detection problems for RSMs can be studied with space and time complexity that scales linearly with the size of RSMs. 
%Due to recursion, the underlying global state-space is infinite and behaves like a pushdown system. 
\textit{Recursive Game Graphs}, a natural adaptation of RSMs to a game theoretic setting, have been studied in \cite{Etessami2004}; \textit{Pushdown Graphs} have been studied in \cite{Cachat2002a}. 
Both HSMs and RSMs do not exhibit concurrent compositional features. \textit{Communicating hierarchical state machines} (CHSMs) \cite{CHSM} generalize HSMs, by allowing FSMs to interact not only sequentially but also concurrently, through the notion of parallel composition. %A naive approach to analyze and control such systems is to expand them to ordinary FSMs, thus incurring in an exponential grow of the set of states. This method is in general demanding from the computational complexity point of view. 
%A challenge in this research direction is to derive formal methods for the analysis of non--flat systems, by directly exploiting their inherent hierarchical compositional structure. 
Reachability problems and checking language and bisimulation equivalences for CHSMs are proven in \cite{CHSM} to fall in the class of exponential time and space complexity problems. This complexity result is in line with the ones further established in \cite{StateExplosion2,StateExplosion1} on  complexity arising in checking a range of equivalence notions in the linear time--branching time spectrum \cite{Spectrum} for networks of FSMs, modeled by parallel composition of FSMs. 
%In particular, these papers showed that checking any equivalence relation lying between bisimilarity and trace preorder is an exponential time hard problem, as conjectured in \cite{StateExplosion0}. 
By following the conjecture of Rabinovich in \cite{StateExplosion0}, the work in \cite{StateExplosion2,StateExplosion1} strongly suggest that there is no way to escape the so--called state explosion problem, when checking behavioral relations and in particular bisimulation equivalence, for non--flat systems exhibiting concurrent--types interaction. \\
In this paper we identify a novel class of concurrent non--flat systems, termed \textit{Arenas of Finite State Machines} (AFSMs) for which complexity reduction via bisimulation can be performed without incurring in the state explosion problem. AFSMs are collections of FSMs that interact concurrently, through a communication network. For AFSMs we propose a notion of equivalence, termed \textit{compositional bisimulation}, that is based on the communication network governing the interaction mechanism among the FSMs. The main contribution of the paper resides in showing that compositional bisimulation equivalence between AFSMs implies bisimulation equivalence between the corresponding expanded FSMs. 
This result is important because it implies that all properties preserved by bisimulation equivalence, e.g. linear temporal logic properties \cite{ModelChecking}, are also preserved by compositional bisimulation. Therefore, it can be of help in the formal verification and control design of complex systems modeled by AFSMs that admit compositional bisimulation. 
%Since existence of compositional bisimulation between a pair of AFSMs $\mathbb{A}^{1}$ and $\mathbb{A}^{2}$ implies bisimulation equivalence between the corresponding expanded FSMs $\mathbb{M}(\mathbb{A}^{1})$ and $\mathbb{M}(\mathbb{A}^{2})$ 
%all properties preserved by bisimulation equivalence between the FSMs $\mathbb{M}(\mathbb{A}^{1})$ and $\mathbb{M}(\mathbb{A}^{2})$ obtained by expanding the AFSMs $\mathbb{A}^{1}$ and $\mathbb{A}^{2}$, e.g. linear temporal logic properties \cite{ModelChecking}, are also preserved by compositional bisimulation between AFSMs $\mathbb{A}^{1}$ and $\mathbb{A}^{2}$. Therefore, the proposed results can be of help in the formal verification and control design of complex systems modeled by AFSMs that admit compositional bisimulation. 
%The notion of compositional bisimulation provides a method to escape the state explosion problem. 
A computational complexity analysis reported in the paper reveals that 
%Existence of compositional bisimulations allows checking bisimulation equivalence of AFSMs, without the need of expanding them to ordinary FSMs and hence, without incurring in the aforementioned state explosion problem. 
checking compositional bisimulation between a pair of AFSMs scales as $O(N_{1}^{2}+N_{2}^{2})$ in space complexity and as $O((N_{1}^{2}+N_{2}^{2})\ln(N_{1}+N_{2}))$ in time complexity, with the numbers $N_{1}$ and $N_{2}$ of FSMs composing the AFSMs.  
A standard approach, based on expanding the AFSMs to the corresponding FSMs exhibits an exponential space and time complexity. 
%, as discussed in the paper. 
%An explicit complexity analysis in checking bisimulation between FSMs, obtained by expanding AFSMs, is also given which shows exponential space and time complexity, in accordance to the literature on non--flat systems. 
%Checking bisimulation equivalence of the AFSMs by expanding them to FSMs is shown to be exponential hard problem. 
%A computational complexity analysis is discussed, which quantifies the gain of the proposed approach. 
%A preliminary investigation on AFSMs appeared in \cite{PolaIFAC2011}. In this paper we present a detailed and mature description of the results announced in \cite{PolaIFAC2011}, which includes proofs. 
% and an example in the context of the regulation of gene expression in the bacterium \textit{Escherichia Coli}  \cite{Alon,BiocBook,BiocBook1}. 
An application of the proposed results to the modeling and complexity reduction of the regulation of gene expression in the single--celled bacterium \textit{E. coli} is included.

\section{Preliminary definitions}
\label{Sec:FSM}
\subsection{Notation}
Given a set $A$, the symbol $2^{A}$ denotes the set of subsets of $A$ and the symbol $|A|$ denotes the cardinality of $A$. 
If $|A|=1$ then $A$ is said a singleton. 
A relation $R\subseteq A\times B$ is said to be total if for any $a\in A$ there exists $b\in B$ such that $(a,b)\in R$ and conversely, for any $b\in B$ there exists $a\in A$ such that $(a,b)\in R$. Given a relation $R\subseteq A\times B$, the inverse of $R$, denoted $R^{-1}$, is defined as $\{(b,a)\in B\times A : (a,b)\in R\}$. A relation $R\subseteq A\times B$ is the identity relation if $A=B$ and $a=b$ for all $(a,b)\in R$. %We denote by $R(a)$ and $R^{-1}(b)$ the sets $\{a\in A | (a,b)\in R\}$ and $\{b\in B | (a,b)\in R\}$. 
A directed graph is a tuple $G=(V,E)$ where $V$ is the set of vertices and $E$ is the set of edges. %Any set involved in the sequel is assumed to be embedded in the universal set $\mathcal{U}$.
We denote by $\mathbb{N}$ the set of positive integers. 

\subsection{Finite State Machines}
In this paper we consider finite state machines in the formulation of Moore \cite{Moore} where states are labeled with outputs and transitions are labeled with inputs.
\begin{definition} 
\cite{FSMmin} 
\label{DefFSM}
A Finite State Machine (FSM) is a tuple 
\begin{equation}
\label{FSMtuple}
M=(X,x^{0},U,Y,H,\Delta),
\end{equation}
where $X$ is a finite set of states, $x^{0}\in X$ is the initial state, $U$ is a finite set of input symbols, $Y$ is a finite set of output symbols,  $H:X \rightarrow 2^{Y}$ is an output map, and $\Delta \subseteq X\times 2^{U} \times X$ is a transition relation. 
%where:
%\begin{itemize}
%\item $X$ is a finite set of states;
%\item $x^{0}\in X$ is the initial state;
%\item $U$ is a finite set of input symbols;
%\item $Y$ is a finite set of output symbols;
%\item $H:X \rightarrow 2^{Y}$ is an output map;
%\item $\Delta \subseteq X\times 2^{U} \times X$ is a transition relation. 
%\end{itemize}
\end{definition}
When $x^{0}$ is skipped from the tuple in (\ref{FSMtuple}) any state in $X$ is assumed to be an initial state. 
We denote a transition $(x,u,x^{\prime})\in\Delta$ of FSM $M$ by $x \rTo^{u}_{\Delta} x^{\prime}$.  
By definition of $\Delta$, a transition of the form $x \rTo^{\varnothing}_{\Delta} x^{\prime}$ is allowed. Such a transition is viewed as private or internal to $M$. Throughout the paper we refer to an input $u=\varnothing$ as internal, and an input $u\neq\varnothing$ as external to $M$. 
Analogously, for a state $x\in X$, $H(x)=\varnothing$ is allowed, meaning that state $x$ is not visible from the external environment. 
Despite classical formulations of Moore machines that model the transition relation as $\Delta\subseteq X\times U \times X$ and the output function as $H:X \rightarrow Y$, we model here $\Delta$ as a subset of $X\times 2^{U} \times X$ and $H$ as a function from $X$ to $2^{Y}$. By this choice, multiple interactions of FSMs can be considered, as illustrated in Example \ref{exex} on a simple distributed system. 

\subsection{Equivalence notions}
Several notions of equivalence have been proposed for the class of finite state machines, see e.g. \cite{Spectrum}. 
In this paper we focus on the notion of bisimulation equivalence \cite{Milner,Park} that 
%Bisimulation equivalence 
is widely used as an effective tool to mitigate complexity of verification and control design of large scale complex systems, see e.g. \cite{ModelChecking}. 
%Intuitively, a bisimulation relation between a pair of FSMs $M_{1}$ and $M_{2}$ is a relation between the corresponding sets of states explaining how a state run of $M_{1}$ can be transformed into a state run of $M_{2}$ and vice versa. 
Consider a pair of FSMs $M_{i}=(X_{i},x^{0}_{i},U_{i},Y_{i},H_{i},\Delta_{i})$ ($i=1,2$). 
We start by recalling the notion of isomorphism. 
\begin{definition}
\label{DefIso}
The FSMs $M_{1}$ and $M_{2}$ are isomorphic, denoted $M_{1} \cong^{\iso} M_{2}$, if there exists a bijective function $\mathcal{T}:X_{1}\rightarrow X_{2}$ such that 
$x^{0}_{2}=\mathcal{T}(x^{0}_{1})$, $H_{1}(x_{1})=H_{2}(\mathcal{T}(x_{1}))$ for any $x_{1}\in X_{1}$, and $x_{1}\rTo^{u}_{\Delta_{1}} x^{\prime}_{1}$ if and only if $\mathcal{T}(x_{1})\rTo^{u}_{\Delta_{2}} \mathcal{T}(x^{\prime}_{1})$. 
%\begin{itemize}
%\item $x^{0}_{2}=\mathcal{T}(x^{0}_{1})$;
%\item for any $x_{1}\in X_{1}$, $H_{1}(x_{1})=H_{2}(\mathcal{T}(x_{1}))$;
%\item $x_{1}\rTo^{u}_{\Delta_{1}} x^{\prime}_{1}$ if and only if $\mathcal{T}(x_{1})\rTo^{u}_{\Delta_{2}} \mathcal{T}(x^{\prime}_{1})$. 
%\end{itemize}
\end{definition}
The notion of isomorphism is an equivalence relation on the class of FSMs. 
The notion of bisimulation equivalence is reported hereafter.
\begin{definition}
\label{Bis} 
A set $R\subseteq X_{1}\times X_{2}$ is a bisimulation relation between $M_{1}$ to $M_{2}$ if for any $(x_{1},x_{2})\in R$,
\begin{itemize}
\item[(i)] $H_{1}(x_{1})=H_{2}(x_{2})$;
\item[(ii)] existence of $x_{1} \rTo^{u_{1}}_{\Delta_{1}} x_{1}^{\prime}$ implies existence of $x_{2} \rTo^{u_{2}}_{\Delta_{2}} x_{2}^{\prime}$ such that $u_{1}=u_{2}$ and $(x_{1}^{\prime},x_{2}^{\prime})\in R$;
\item[(iii)] existence of $x_{2} \rTo^{u_{2}}_{\Delta_{2}} x_{2}^{\prime}$ implies existence of $x_{1} \rTo^{u_{1}}_{\Delta_{1}} x_{1}^{\prime}$ such that $u_{1}=u_{2}$ and $(x_{1}^{\prime},x_{2}^{\prime})\in R$.
\end{itemize}
%FSM $M_{1}$ is simulated by FSM $M_{2}$, denoted $M_{1} \preceq M_{2}$, if there exists a simulation relation from $M_{1}$ to $M_{2}$.
%\end{definition}
%\begin{definition}
%\label{Bis2} 
%Given a pair of FSMs $M_{i}=(X_{i},x^{0}_{i},U_{i},$ $Y_{i},H_{i},\Delta_{i})$ ($i=1,2$), a r
FSMs $M_{1}$ and $M_{2}$ are bisimilar, denoted $M_{1} \cong M_{2}$, if 
\begin{itemize}
\item[(iv)] \mbox{$(x^{0}_{1},x^{0}_{2})\in R$}.
\end{itemize}
\end{definition}

When the initial states $x^{0}_{1}$ and $x^{0}_{2}$ are skipped from the tuples $M_{1}$ and $M_{2}$, condition (iv) is replaced by requiring relation $R$ to be total. 
Bisimulation equivalence is an equivalence relation on the class of FSMs. 
The maximal bisimulation relation between FSMs $M_{1}$ and $M_{2}$ is a bisimulation relation $R^{\ast}(M_{1},M_{2})$ such that $R\subseteq R^{\ast}(M_{1},M_{2})$ for any bisimulation relation $R$ between $M_{1}$ and $M_{2}$. The maximal bisimulation relation exists and is unique. Given an  FSM $M$ the set $R^{\ast}(M,M)$ is an equivalence relation on the set of states of $M$. 
%We recall hereafter the notion of quotient \cite{ModelChecking} of an FSM $M$ induced by the equivalence relation $ R^{\ast}(M,M)$.
%\begin{definition}
%\label{def:quotient}
%The quotient of an FSM $M=(X,x^{0},U,$ $Y,$ $H,\Delta)$ induced by $R^{\ast}(M,M)$ is the FSM $M^{\ast}=(X^{\ast},x^{0,\ast},U^{\ast},Y^{\ast},H^{\ast},\Delta^{\ast})$, where:
%\begin{itemize}
%\item $X^{\ast}=\{C_{1},C_{2},...,C_{N}\}$, where $C_{i}$ are the equivalence classes induced by $R^{\ast}(M,M)$ on $X$;
%\item $x^{0,\ast}=\{x^{0}\}$;
%\item $U^{\ast}=U$;
%\item $Y^{\ast}=Y$;
%\item $H^{\ast}:X^{\ast}\rightarrow Y^{\ast}$ is defined by $H^{\ast}(C_{i})=y$, if $H(x)=y$ for any 
%$x\in C_{i}$;
%\item $\Delta^{\ast}\subseteq X^{\ast} \times 2^{U^{\ast}} \times X^{\ast}$ is defined by $C_{i}\rTo_{\Delta^{\ast}}^{u} C_{j}$, if 
%$x\rTo_{\Delta}^{u} x'$ for any $x\in C_{i}$ and $x'\in C_{j}$. 
%\end{itemize}
%\end{definition}
The quotient of $M$ induced by $R^{\ast}(M,M)$, denoted $\mathbf{M}_{\min}(M)$, is the FSM bisimilar to $M$ with the minimal number of states \cite{ModelChecking}. FSM $\mathbf{M}_{\min}(M)$ exists and is unique up to isomorphisms.
\begin{lemma}
\label{lemma}
%Given %a pair of FSMs 
%$M_{i}=(X_{i},x^{0}_{i},U_{i},$ $Y_{i},H_{i},\Delta_{i})$ ($i=1,2$), if 
If $\mathbf{M}_{\min}(M_{1}) \cong \mathbf{M}_{\min}(M_{2})$ then $M_{1}\cong^{\iso} M_{2}$.
\end{lemma}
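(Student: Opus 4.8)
The content to establish is that two bisimulation-\emph{minimal} FSMs that are bisimilar must in fact be isomorphic; since $\mathbf{M}_{\min}(M_1)$ and $\mathbf{M}_{\min}(M_2)$ are quotients by the respective maximal self-bisimulations, this is exactly the situation at hand. Write $N_i=\mathbf{M}_{\min}(M_i)$ and recall that, by construction of the quotient, each $N_i$ is minimal in the sense that $R^{\ast}(N_i,N_i)$ is the identity relation on its state set, i.e. two distinct states of $N_i$ are never bisimilar. Assuming $N_1\cong N_2$, let $R=R^{\ast}(N_1,N_2)$ be the maximal bisimulation between them. The plan is to show that $R$ is the graph of a bijection $\mathcal{T}$ and that $\mathcal{T}$ satisfies the clauses of Definition \ref{DefIso}.

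First I would record two standard facts on which the argument rests: the inverse $R^{-1}$ of a bisimulation is again a bisimulation (immediate by swapping clauses (ii) and (iii) of Definition \ref{Bis}), and the relational composition of two bisimulations is a bisimulation (chase the matching transitions through the intermediate machine). Granting these, the heart of the proof is to show that $R$ is single-valued and injective. For single-valuedness, observe that $R^{-1}\circ R$ is a self-bisimulation on $N_2$; hence it is contained in $R^{\ast}(N_2,N_2)$, which is the identity by minimality, so $(x,y)\in R$ and $(x,y^{\prime})\in R$ force $y=y^{\prime}$. Symmetrically, $R\circ R^{-1}$ is a self-bisimulation on $N_1$, and minimality of $N_1$ makes $R$ injective.

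It remains to obtain totality and surjectivity, so that $\mathcal{T}:=R$ is a bijection between the state sets of $N_1$ and $N_2$. In the formulation without initial states this is immediate, since bisimilarity there demands that $R$ be total, which gives both that $R$ is defined on every state of $N_1$ and that every state of $N_2$ is hit; combined with the previous step, $\mathcal{T}$ is a bijection. In the formulation with initial states I would instead start from the single pair $(x^{0}_{1},x^{0}_{2})\in R$ and propagate along matched transitions via clauses (ii)--(iii), covering every reachable state; since minimization is understood on the reachable part, this again yields a total and surjective $R$.

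Finally I would verify the isomorphism clauses for $\mathcal{T}$: output preservation $H_1(x)=H_2(\mathcal{T}(x))$ is exactly clause (i); the equivalence $x\rTo^{u}_{\Delta_{1}} x^{\prime}$ if and only if $\mathcal{T}(x)\rTo^{u}_{\Delta_{2}}\mathcal{T}(x^{\prime})$ follows from clauses (ii)--(iii) together with single-valuedness of $R$ and of $R^{-1}$, which pin the matched successor down to $\mathcal{T}(x^{\prime})$ rather than to merely some $R$-related state; and $\mathcal{T}(x^{0}_{1})=x^{0}_{2}$ is clause (iv). I expect the single-valuedness and injectivity step to be the crux: the whole argument turns on the fact that composing a bisimulation with its inverse produces a self-bisimulation, which minimality then collapses to the identity. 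Once that is in place, totality and the verification of Definition \ref{DefIso} are routine.
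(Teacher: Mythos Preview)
Your proof is correct and follows the same strategy as the paper: take the maximal bisimulation $R^{\ast}$ between the two minimal FSMs and argue that it is the graph of a bijection satisfying Definition~\ref{DefIso}. You supply more justification than the paper does (which simply asserts that the fibers $R^{\ast}(x_{1})$ and $(R^{\ast})^{-1}(x_{2})$ are singletons, whereas you derive this via the composition $R^{-1}\circ R$ and $R\circ R^{-1}$ collapsing to the identity), and your reading of the conclusion as $\mathbf{M}_{\min}(M_{1})\cong^{\iso}\mathbf{M}_{\min}(M_{2})$ is exactly what the paper's own proof establishes and how the lemma is used in Corollary~\ref{coro}.
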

\begin{proof}
Let $X_{i}$ be the set of states of $M_{i}$. 
%It is easy to see by contradiction arguments that m
Minimality of $\mathbf{M}_{\min}(M_{1})$ and $\mathbf{M}_{\min}(M_{2})$ implies that the maximal bisimulation relation $R^{\ast}$ between $\mathbf{M}_{\min}(M_{1})$ and $\mathbf{M}_{\min}(M_{2})$ is such that for any $x_{1}\in X_{1}$ and $x_{2}\in X_{2}$, sets $R^{\ast}(x_{1})=\{x_{2}\in X_{2} | (x_{1},x_{2})\in R^{\ast}\}$ and $(R^{\ast})^{-1}(x_{2})=\{x_{1}\in X_{1} | (x_{1},x_{2})\in R^{\ast}\}$ are singletons. Hence, define function $\mathcal{T}: X_{1}\rightarrow X_{2}$ by $\mathcal{T}(x_{1})=x_{2}$ when $R^{\ast}(x_{1})=\{x_{2}\}$. It is easy to see that function $\mathcal{T}$ satisfies the properties required in Definition \ref{DefIso}.
\end{proof}

We conclude this section by recalling space and time complexity in checking bisimulation equivalence between FSMs. 
%Consider a pair of FSMs $M_{i}=(X_{i},x^{0}_{i},U_{i},Y_{i},H_{i},\Delta_{i})$ ($i=1,2$). 
\begin{proposition}
\label{prop21}
\cite{BisAlg}
Space complexity in checking $M_{1}\cong M_{2}$ is $O(|X_{1}|+|\Delta_{1}|+|X_{2}|+|\Delta_{2}|)$.
\end{proposition}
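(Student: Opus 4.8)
The plan is to reduce the equivalence check to a single coarsest-stable-partition computation on the disjoint union of the two machines, and then to account for the space used by a partition-refinement implementation. First I would form $M = M_{1} \uplus M_{2}$, whose state set is $X_{1} \cup X_{2}$ (taken disjoint), whose transition relation is $\Delta_{1} \cup \Delta_{2}$, and whose output map agrees with $H_{i}$ on $X_{i}$. Computing the maximal bisimulation $R^{\ast}(M,M)$ on $M$ simultaneously yields the maximal bisimulation between $M_{1}$ and $M_{2}$: by Definition~\ref{Bis}, $M_{1} \cong M_{2}$ holds exactly when the initial states $x_{1}^{0}$ and $x_{2}^{0}$ fall in the same class of $R^{\ast}(M,M)$ (or, when initial states are omitted, when the induced relation between $X_{1}$ and $X_{2}$ is total). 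Thus the whole problem collapses to computing one coarsest stable partition, and the claimed bound is the space footprint of that computation.

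Next I would fix the data structures and bound their size. Represent the current partition by a map assigning to each state its block identifier, costing $O(|X_{1}|+|X_{2}|)$; store $\Delta_{1} \cup \Delta_{2}$ together with predecessor (inverse) adjacency lists, costing $O(|\Delta_{1}|+|\Delta_{2}|)$. The initial partition groups states with equal output $H$, obtained by bucketing on output labels in linear space. Refinement then proceeds in the standard way (Kanellakis--Smolka / Paige--Tarjan): maintain a worklist of candidate splitter blocks and, for each splitter $B$, split every block according to whether a state has a matching-label transition into $B$. Each such split only relabels states and appends at most one new block to the worklist, so the partition map, the transition store, and the worklist are all simultaneously bounded by $O(|X_{1}|+|\Delta_{1}|+|X_{2}|+|\Delta_{2}|)$, giving the stated space bound.

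The delicate point --- and the step I expect to be the main obstacle --- is verifying that the auxiliary bookkeeping required for efficiency never exceeds linear space, despite two complications specific to this formulation. The first is the per-splitter counters used to test, in the refinement step, whether a state retains transitions into both halves of a split block; here the key observation is that the total count stored across all states is at most $|\Delta_{1}|+|\Delta_{2}|$, and that counters are allocated and freed per splitter so that no more than a linear amount is live at once. The second is that transition labels are drawn from the powersets $2^{U_{i}}$ rather than from $U_{i}$, so a single label $u\in 2^{U_{i}}$ could in principle be large; I would circumvent this by treating each distinct label actually occurring in $\Delta_{1}\cup\Delta_{2}$ as an atomic symbol, of which there are at most $|\Delta_{1}|+|\Delta_{2}|$, so that label comparisons during splitting contribute no extra asymptotic space. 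With these two points settled, summing the footprints of the partition map, the stored transitions, the predecessor lists, and the transient counters yields the bound $O(|X_{1}|+|\Delta_{1}|+|X_{2}|+|\Delta_{2}|)$.
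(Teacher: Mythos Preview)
The paper does not prove this proposition at all: it is stated with the citation \cite{BisAlg} and no argument is given, since it is a known result from the partition-refinement literature (Paige--Tarjan / Kanellakis--Smolka). Your proposal goes well beyond what the paper does by actually sketching the underlying construction and space accounting; the sketch is correct in outline and is precisely the standard argument that \cite{BisAlg} formalises, so there is nothing to compare on the paper's side beyond noting that you have supplied what the paper merely cites.
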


\begin{proposition}
\label{prop22}
\cite{BisAlg}
Time complexity in checking $M_{1}\cong M_{2}$ is $O((|\Delta_{1}|+|\Delta_{2}|)\ln(|X_{1}|+|X_{2}|))$. 
\end{proposition}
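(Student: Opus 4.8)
The plan is to reduce the check of $M_1\cong M_2$ to the computation of a coarsest stable partition (the relational coarsest partition problem) and to solve the latter by a Hopcroft--Paige--Tarjan style partition-refinement procedure. First I would form the disjoint union $M=M_1\uplus M_2$, a single FSM whose state set has size $|X_1|+|X_2|$ and whose transition relation has size $|\Delta_1|+|\Delta_2|$. Since bisimulation is an equivalence relation, the restriction of the maximal autobisimulation $R^\ast(M,M)$ to pairs with one state in $X_1$ and the other in $X_2$ coincides with the maximal bisimulation between $M_1$ and $M_2$; hence checking $M_1\cong M_2$ reduces to computing $R^\ast(M,M)$ and testing whether $x_1^0$ and $x_2^0$ lie in the same equivalence class (or, when the initial states are omitted from the tuples, whether the relation induced across the two components is total, as required by Definition \ref{Bis}).

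Next I would set up the partition refinement. The initial partition groups states by their output value $H(\cdot)$, so that condition (i) of Definition \ref{Bis} holds inside every block. Conditions (ii)--(iii) amount to \emph{stability} of the partition with respect to each input label: for a fixed input $u$, a block must map, under its $u$-labeled transitions, either entirely into or entirely out of every candidate block. The refinement loop repeatedly selects a block acting as a \emph{splitter} and subdivides every block that fails to be stable with respect to it; upon termination no block can be split further, and the resulting blocks are exactly the equivalence classes of $R^\ast(M,M)$.

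The delicate point in reaching the stated bound is the accounting of the refinement work. A naive implementation rescans the whole transition relation at each split and costs $O((|\Delta_1|+|\Delta_2|)(|X_1|+|X_2|))$. The improvement to $O((|\Delta_1|+|\Delta_2|)\ln(|X_1|+|X_2|))$ rests on the ``process the smaller half'' strategy: when a block splits, only the transitions incident to the smaller of the two resulting sub-blocks are used to generate future splitters, while per-state, per-label counters recording how many transitions enter the current splitter let each split be computed in time proportional to the incoming transitions of that smaller part. Because a given state can lie in a ``smaller half'' at most $\ln(|X_1|+|X_2|)$ times, every transition is charged at most logarithmically often.

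I expect the main obstacle to be this charging argument together with the labeled, multi-relation structure $\Delta\subseteq X\times 2^U\times X$: one must verify that processing the transitions label by label does not inflate the bound. This holds because the distinct labels occurring on transitions partition $|\Delta_1|+|\Delta_2|$ disjointly, and the per-label counters and splitter queues can be maintained in amortized constant time per incident transition, so that the total cost remains $O((|\Delta_1|+|\Delta_2|)\ln(|X_1|+|X_2|))$.
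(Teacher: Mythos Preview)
The paper does not prove this proposition at all: it is stated with the citation \cite{BisAlg} and no proof environment follows. It is simply quoted as a known complexity bound from the literature, to be used later in Propositions~\ref{prop25} and~\ref{prop26}. Your sketch is a faithful outline of the standard Paige--Tarjan partition-refinement argument that underlies that citation (disjoint union of the two machines, initial partition by output, stability under labeled transitions, and the ``process the smaller half'' charging), so there is no discrepancy to discuss --- you have supplied a proof where the paper deliberately defers to the reference.
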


\section{Arenas of finite state machines}
%One approach to capture multiple interaction of FSMs in a complex system is to resort to the notion of non--flat systems, see e.g. \cite{HFSM1,CHSM}. 
In this section we introduce a new class of non--flat systems \cite{HFSM1,CHSM}, called \textit{Arenas of Finite State Machines} (AFSMs). AFMSs 
are collections of FSMs that interact concurrently through a communication network. 
The syntax of an AFSM is specified by a directed graph: 
\[
\mathbb{A}=(\mathbb{V},\mathbb{E}), 
\]
where $\mathbb{V}$ is a collection of $N$ FSMs $M_{i}$ and $\mathbb{E}\subseteq \mathbb{V} \times \mathbb{V}$ describes the communication network of the FSMs $M_{i}$. 
%\begin{itemize}
%\item $\mathbb{V}$ is a collection of $N$ FSMs $M_{i}=(X_{i},x^{0}_{i},U_{i},Y_{i},H_{i},\Delta_{i})$ ($i=1,2,...,N$);
%\item $\mathbb{E}\subseteq \mathbb{V} \times \mathbb{V}$ describes the communication network of the FSMs $M_{i}$. 
%\end{itemize}
In the definition of $\mathbb{E}$ self loops $(M_{i},M_{i})\in \mathbb{E}$ would model communication of $M_{i}$ with itself, which is tautological. For this reason in the sequel we assume $(M_{i},M_{i})\notin \mathbb{E}$. 
By expanding each vertex $M_{i}\in \mathbb{V}$ of $\mathbb{A}$ an ordinary FSM is obtained, which is defined by:
\[
\mathbb{M}(\mathbb{A})=(X,x^{0},U,Y,H,\Delta),
\]
where $X=X_{1}\times X_{2} \times ... \times X_{N}$, $x^{0}=(x^{0}_{1},x^{0}_{2},...,x^{0}_{N})$, $U=\bigcup_{M_{i}\in \mathbb{V}} U_{i}$, $Y=\bigcup_{M_{i}\in \mathbb{V}}Y_{i}$, $H((x_{1},x_{2},...,x_{N}))=\bigcup_{M_{i}\in \mathbb{V}} H_{i}(x_{i})$, and $\Delta \subseteq X\times 2^{U} \times X$ is such that 
\begin{equation}
\label{ttrans}
(x_{1},x_{2},...,x_{N})\rTo_{\Delta}^{u} (x_{1}^{\prime},x_{2}^{\prime},...,x_{N}^{\prime}),
\end{equation}
%whenever the following conditions are satisfied:
%\begin{itemize}
%\item[(i)] $x_{i} \rTo_{\Delta_{i}}^{u_{i}} x^{\prime}_{i}$ is a transition of $M_{i}$, for some $u_{i}\in U_{i}$;
%\item[(ii)] $u=\bigcup_{M_{i}\in \mathbb{V}} (u_{i}\backslash (\bigcup_{M_{j}\in \Pre(\mathbb{A},M_{i})}H_{j}(x_{j})))$, where $
%\Pre(\mathbb{A},M_{i})=\{M_{j}\in \mathbb{V}\,|\,(M_{j},M_{i})\in \mathbb{E}\}$. 
%\end{itemize}
whenever $x_{i} \rTo_{\Delta_{i}}^{u_{i}} x^{\prime}_{i}$ is a transition of $M_{i}$ for some $u_{i}$ ($i=1,2,...,N$) and 
\begin{equation}
\label{u}
u=\bigcup_{M_{i}\in \mathbb{V}} (u_{i}\backslash (\bigcup_{M_{j}\in \Pre(\mathbb{A},M_{i})}H_{j}(x_{j}))),
\end{equation}
where $\Pre(\mathbb{A},M_{i})=\{M_{j}\in \mathbb{V}\,|\,(M_{j},M_{i})\in \mathbb{E}\}$. 
%where:
%\begin{itemize}
%\item $X=X_{1}\times X_{2} \times ... \times X_{N}$;
%\item $x^{0}=(x^{0}_{1},x^{0}_{2},...,x^{0}_{N})$;
%\item $U=\cup_{M_{i}\in \mathbb{V}} U_{i}$;
%\item $Y=\cup_{M_{i}\in \mathbb{V}}Y_{i}$;
%\item $H((x_{1},x_{2},...,x_{N}))=\bigcup_{M_{i}\in \mathbb{V}} H_{i}(x_{i})$;
%\item $\Delta \subseteq X\times 2^{U} \times X$ is such that 
%\begin{equation}
%\label{ttrans}
%(x_{1},x_{2},...,x_{N})\rTo_{\Delta}^{u} (x_{1}^{\prime},x_{2}^{\prime},...,x_{N}^{\prime}),
%\end{equation}
%whenever the following conditions are satisfied:
%\begin{itemize}
%\item[(i)] $x_{i} \rTo_{\Delta_{i}}^{u_{i}} x^{\prime}_{i}$ is a transition of $M_{i}$, for some $u_{i}\in U_{i}$;
%\item[(ii)] $u=\bigcup_{M_{i}\in \mathbb{V}} (u_{i}\backslash (\bigcup_{M_{j}\in \Pre(\mathbb{A},M_{i})}H_{j}(x_{j})))$, where $
%\Pre(\mathbb{A},M_{i})=\{M_{j}\in \mathbb{V}\,|\,(M_{j},M_{i})\in \mathbb{E}\}$. 
%\end{itemize}
%\end{itemize}

\begin{proposition}
Given an AFSM $\mathbb{A}$, the FSM $\mathbb{M}(\mathbb{A})$ is unique.
\end{proposition}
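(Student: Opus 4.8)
The plan is to verify that each of the six components of the tuple $\mathbb{M}(\mathbb{A})=(X,x^{0},U,Y,H,\Delta)$ is fixed unambiguously by the data $\mathbb{A}=(\mathbb{V},\mathbb{E})$ and the FSMs $M_{i}$ it contains, so that no two distinct FSMs can satisfy the defining prescription. First I would dispatch the state set $X=X_{1}\times\cdots\times X_{N}$, the initial state $x^{0}=(x^{0}_{1},\ldots,x^{0}_{N})$, the input alphabet $U=\bigcup_{M_{i}\in\mathbb{V}}U_{i}$ and the output alphabet $Y=\bigcup_{M_{i}\in\mathbb{V}}Y_{i}$: each is given by an explicit set--theoretic expression (a Cartesian product, a distinguished tuple, or a finite union) in the components of the member FSMs, and such operations return a single well--defined set, so these four components are uniquely determined.

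Next I would treat the output map $H$. It is specified pointwise, $H((x_{1},\ldots,x_{N}))=\bigcup_{M_{i}\in\mathbb{V}}H_{i}(x_{i})$, assigning to each argument $(x_{1},\ldots,x_{N})\in X$ exactly one subset of $Y$; hence $H$ is a uniquely determined function. The only component deserving genuine attention is the transition relation $\Delta$, and I would argue that the prescription (\ref{ttrans})--(\ref{u}) defines $\Delta$ as the set of all triples $(x,u,x^{\prime})\in X\times 2^{U}\times X$, with $x=(x_{1},\ldots,x_{N})$ and $x^{\prime}=(x^{\prime}_{1},\ldots,x^{\prime}_{N})$, for which there exist local inputs $u_{i}$ with $x_{i}\rTo^{u_{i}}_{\Delta_{i}}x^{\prime}_{i}$ for every $i$ and such that $u$ equals the expression in (\ref{u}). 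Once the source state $x$, the target state $x^{\prime}$ and a tuple of witnessing local inputs $(u_{i})$ are fixed, the label $u$ is completely determined by (\ref{u}): the sets $\Pre(\mathbb{A},M_{i})$ are read off from $\mathbb{E}$, and the values $H_{j}(x_{j})$ are fixed by the source state, so the right--hand side of (\ref{u}) evaluates to a single element of $2^{U}$.

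The point that needs care --- and the only place where uniqueness could conceivably fail --- is that a single product transition may be generated by several distinct tuples of local inputs, since a member $M_{i}$ may admit more than one labeled transition between the same pair $x_{i},x^{\prime}_{i}$. I would resolve this by noting that the defining condition quantifies existentially over such tuples: $\Delta$ simply collects the triples arising from every admissible choice of $(u_{i})$, so the multiplicity only influences which labels $u$ appear, while for a fixed triple $(x,u,x^{\prime})$ the question of membership in $\Delta$ is an unambiguous yes/no predicate. Consequently $\Delta$ is a single well--defined subset of $X\times 2^{U}\times X$. Since every component of the tuple is thereby pinned down by $\mathbb{A}$, the FSM $\mathbb{M}(\mathbb{A})$ is unique, completing the argument.
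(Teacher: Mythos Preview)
Your proof is correct and follows essentially the same approach as the paper: the paper's argument likewise dispatches $X$, $x^{0}$, $U$, $Y$, $H$ as immediately determined by the data, and then observes that for any fixed collection of local transitions $x_{i}\rTo^{u_{i}}_{\Delta_{i}}x^{\prime}_{i}$ the global label $u$ in (\ref{u}) is uniquely specified. Your treatment is more thorough in one respect---you explicitly address the possibility of multiple witnessing tuples $(u_{i})$ for the same pair $(x,x^{\prime})$ and note that $\Delta$ is defined as the set of all resulting triples, a point the paper leaves implicit.
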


\begin{proof}
Entities $X$, $x^{0}$, $U$, $Y$ and $H$ in $\mathbb{M}(\mathbb{A})$ are uniquely determined from $\mathbb{A}$. For any collection of $N$ transitions $x_{i} \rTo_{\Delta_{i}}^{u_{i}} x^{\prime}_{i}$ in $M_{i}$ there exists one and only one transition in $\mathbb{M}(\mathbb{A})$ of the form (\ref{ttrans}) with $u$ uniquely specified by (\ref{u}).
\end{proof}

FSM $\mathbb{M}(\mathbb{A})$ specifies the semantics of the AFSM $\mathbb{A}$. Such a semantic is implicitly given through a composition of FSMs that can be regarded as a notion of parallel composition \cite{ModelChecking} that respects the topology of the AFMS communication network. 
%The main idea behind the above semantics is that outputs of FSMs $M_{i}$ may become inputs of FSM $M_{j}$, (only) when there exists a communication network link from $M_{i}$ to $M_{j}$, or equivalently $(M_{i},M_{j})\in\mathbb{E}$. All inputs of $M_{j}$ which are not provided by those $M_{i}$ for which $(M_{i},M_{j})\in\mathbb{E}$, become inputs that are external to $\mathbb{M}(\mathbb{A})$. 
The following simple example illustrates syntax and semantics of AFSMs.
%For the purpose of illustrating the syntax and semantics of AFSMs we present hereafter a simple example in the context of distributed computing \cite{Andrews}. 
\begin{example}
\label{exex}
Consider a distributed system composed of three computers $C_{1}$, $C_{2}$ and $C_{3}$, whose goal is to compute the Euclidean norm $\Vert z \Vert = \sqrt{z_{1}^{2}+z_{2}^{2}}$ of a vector $z=(z_{1},z_{2})\in \mathbb{R}^{2}$ in a distributed fashion. 
While $C_{1}$ and $C_{2}$ are delegated to compute respectively $z_{1}^{2}$ and $z_{2}^{2}$, $C_{3}$ takes as inputs the computations of $C_{1}$ and $C_{2}$ and outputs $\Vert z \Vert$. 
This simple distributed system can be modeled as the AFMS $\mathbb{A}=(\mathbb{V},\mathbb{E})$ where $\mathbb{V}=\{M_{1},M_{2},M_{3}\}$ and 
$\mathbb{E}=\{(M_{1},M_{3}),(M_{2},M_{3})\}$. 
FSMs $M_{i}$, each one modeling computers $C_{i}$, are illustrated in Figures\footnote{Each circle denotes a state and each edge a transition. In each circle, upper symbol denotes the state and lower symbol the output set associated with the state; symbols labeling edges denote the input sets associated with the transitions.} \ref{FSMs1}(a)(b)(c), while AFSM $\mathbb{A}$, modeling the computers' network, is depicted in Figure \ref{FSMs1}(d). 
By expanding $\mathbb{A}$, the FSM $\mathbb{M}(\mathbb{A})$ is obtained, whose accessible part%\cite{cassandras} 
\footnote{The accessible part of the FSM $M$ in (\ref{FSMtuple}) is the unique sub--finite state machine extracted from $M$, containing all and only the states of $M$ that are reachable (or equivalently, accessible) in a finite number of transitions from its initial state $x^{0}$, see e.g. \cite{cassandras}.} 
is depicted in Figure \ref{M(A)}. Starting from $(1,3,5)$, when receiving the input $\{z_{1},z_{2}\}$, FSM $\mathbb{M}(\mathbb{A})$ outputs in state $(2,4,6)$ the set $\{z_{1}^{2},z_{2}^{2}\}$ and finally in state $(1,3,7)$ the requested output $\{\Vert z \Vert\}$. For illustrating the construction of FSM $\mathbb{M}(\mathbb{A})$, we describe in detail the construction of the transition $(2,4,6) \rTo_{\Delta}^{u} (1,3,7)$. By applying the compositional rules defining the semantics of AFSMs, one gets: 
$2 \rTo^{\varnothing} 1$ is in $M_{1}$, $4 \rTo^{\varnothing} 3$ is in $M_{2}$, and $6 \rTo^{\{z_{1}^{2},z_{2}^{2}\}} 7$ is in $M_{3}$. Moreover, one first note that $\Pre(\mathbb{A},M_{1})=\Pre(\mathbb{A},M_{2})=\varnothing$ and $\Pre(\mathbb{A},M_{3})=\{M_{1},M_{2}\}$, from which $u=\varnothing$. 
%:
%\[
%\begin{array}
%{rcl}
%u 
%  & = & \varnothing \cup \varnothing \cup (\{z_{1}^{2},z_{2}^{2}\}\backslash (\{z_{1}^{2}\}\cup \{z_{2}^{2}\})) = \varnothing .
%\end{array}
%\]
%\begin{itemize}
%\item[(i)] $2 \rTo_{\Delta_{1}}^{\varnothing} 1$ is a transition of $M_{1}$, 
%$4 \rTo_{\Delta_{2}}^{\varnothing} 3$ is a transition of $M_{2}$, and 
%$6 \rTo_{\Delta_{3}}^{\{z_{1}^{2},z_{2}^{2}\}} 7$ is a transition of $M_{3}$.
%\item[(ii)] One first note that $\Pre(\mathbb{A},M_{1})=\Pre(\mathbb{A},M_{2})=\varnothing$ and $\Pre(\mathbb{A},M_{3})=\{M_{1},M_{2}\}$, from which:
%\[
%\begin{array}
%{rcl}
%u 
%  & = & \varnothing \cup \varnothing \cup (\{z_{1}^{2},z_{2}^{2}\}\backslash (\{z_{1}^{2}\}\cup \{z_{2}^{2}\})) = \varnothing .
%\end{array}
%\]
%\end{itemize}
The resulting transition $(2,4,6) \rTo^{\varnothing} (1,3,7)$ is indeed in $\mathbb{M}(\mathbb{A})$, as shown in Figure \ref{M(A)}. 
%Note that for such a transition to occur in $\mathbb{M}(\mathbb{A})$, transitions $2 \rTo_{\Delta_{1}}^{\varnothing} 1$ and $4 \rTo_{\Delta_{2}}^{\varnothing} 3$ need no external inputs, while transition $6 \rTo_{\Delta_{3}}^{\{z_{1}^{2},z_{2}^{2}\}} 7$ needs input $\{z_{1}^{2},z_{2}^{2}\}$ (external to $M_{3}$) which is provided by the outputs $\{z_{1}^{2}\}$ and $\{z_{2}^{2}\}$ of FSMs $M_{1}$ and $M_{2}$, respectively. 
\end{example}
\begin{figure}[t]
\centering
\subfigure[FSM $M_{1}$]{
\begin{tikzpicture}[->,
shorten >=0.1pt,
auto,node distance=2.0cm,thick,
inner sep=0.1 pt ,bend angle=30]
\tikzstyle{every state}=[minimum size=8mm]
\tikzstyle{every node}=[font=\footnotesize]
\tikzstyle{initial}=[pin={[pin distance=5mm,pin edge={<-,shorten <=1pt}]left:$$}]
\node[state, thin, initial] (q_1) {$\frac{\text{  }1\text{  }}{\varnothing}$ };
\node[state, thin] (q_2) [ right of=q_1] {$\frac{\text{  }2\text{  }}{\{z_{1}^{2}\}}$};
\path[->, thin] 
(q_1) 
edge [bend left] node {$\{z_{1}\}$} (q_2)
(q_2) 
edge [bend left] node {$\varnothing$} (q_1);
\end{tikzpicture}
} 
\subfigure[FSM $M_{2}$]{
\begin{tikzpicture}[->,
shorten >=0.1pt,%
auto,node distance=2.0cm,thick,
inner sep=0.1 pt ,bend angle=30]
\tikzstyle{every state}=[minimum size=8mm]
\tikzstyle{every node}=[font=\footnotesize]
\tikzstyle{initial}=[pin={[pin distance=5mm,pin edge={<-,shorten <=1pt}]left:$$}]
\node[state, thin, initial] (q_6) {$\frac{\text{  }3\text{  }}{\varnothing}$};
\node[state, thin] (q_7) [ right of=q_6] {$
\frac{\text{  }4\text{  }}{\{z_{2}^{2}\}}$};
\path[->, thin] 
(q_6) 
edge  [bend left] node {$\{z_{2}\}$} (q_7)
(q_7) 
edge  [bend left] node {$\varnothing$} (q_6);
\end{tikzpicture}
} \\
\subfigure[FSM $M_{3}$]{
\begin{tikzpicture}[->,
shorten >=0.1pt,%
auto,node distance=2.0cm,thick,
inner sep=0.1 pt ,bend angle=30]
\tikzstyle{every state}=[minimum size=8mm]
\tikzstyle{every node}=[font=\footnotesize]
\tikzstyle{initial}=[pin={[pin distance=5mm,pin edge={<-,shorten <=1pt}]left:$$}]
\node[state, thin, initial] (q_3) {$\frac{\text{  }5\text{  }}{\varnothing}$};
\node[state, thin] (q_4) [ below of=q_3] {$\frac{\text{  }6\text{  }}{\varnothing}$};
\node[state, thin] (q_5) [ right of=q_4] {$\frac{\text{  }7\text{  }}{\{\Vert z \Vert\}}$};
\path[->, thin] 
(q_3) 
edge node {$\varnothing$} (q_4)
(q_4) 
edge [bend left] node {$\{z_{1}^{2},z_{2}^{2}\}$} (q_5)
(q_5) 
edge [bend left] node {$\varnothing$} (q_4);
\end{tikzpicture}
}
\subfigure[AFSM $\mathbb{A}$]{
\begin{tikzpicture}[scale=1,->,
shorten >=0.1pt,
auto,node distance=2.0cm,thick,
inner sep=0.1 pt ,bend angle=30]
\tikzstyle{every state}=[minimum size=8mm,shape=rectangle]
\node[state, thin] (q_1) {$M_1$};
\node[state, thin] (q_2) [below of=q_1] {$M_2$};
\node[state, thin] (q_3) [ right of=q_2] {$M_3$};
\path[->, thin] 
(q_1) 
edge node [swap] {} (q_3)
(q_2) 
edge node [swap] {} (q_3);
\end{tikzpicture}
}  
\caption{AFSM $\mathbb{A}$ in Example \ref{exex}. }
\label{FSMs1}
\end{figure}
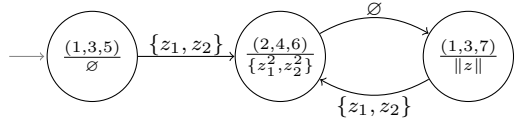
\begin{figure}[t]
\centering
\begin{tikzpicture}[->,
shorten >=0.1pt,
auto,node distance=2.5cm,thick,
inner sep=0.1 pt ,bend angle=30]
\tikzstyle{every state}=[minimum size=12mm]
\tikzstyle{every node}=[font=\footnotesize]
\tikzstyle{initial}=[pin={[pin distance=5mm,pin edge={<-,shorten <=1pt}]left:$$}]
\node[state, thin, initial] (q_1) {$\frac{(1,3,5)}{\varnothing}$};
\node[state, thin] (q_2) [ right of=q_1] {$\frac{(2,4,6)}{\{z_{1}^{2},z_{2}^{2}\}}$};
\node[state, thin] (q_3) [ right of=q_2] {$\frac{(1,3,7)}{\Vert z \Vert }$};
\path[->, thin] 
(q_1) 
edge node {$\{z_{1},z_{2}\}$} (q_2)
(q_2) 
edge [bend left] node {$\varnothing$} (q_3)
(q_3) 
edge [bend left] node {$\{z_{1},z_{2}\}$} (q_2);
\end{tikzpicture}
\caption{FSM $\mathbb{M}(\mathbb{A})$ in Example \ref{exex}.}
\label{M(A)}
\end{figure} 
\section{Compositional bisimulation of AFSMs}
A na\"ive approach to check bisimulation equivalence of two AFSMs $\mathbb{A}^{1}$ and $\mathbb{A}^{2}$ consists in first expanding them to FSMs $\mathbb{M}(\mathbb{A}^{1})$ and $\mathbb{M}(\mathbb{A}^{2})$ and then apply standard bisimulation algorithms (see e.g. \cite{BisAlg,Piazza,Hopcroft}). 
The main practical limitation of this approach resides in the well--known state explosion problem, see e.g. \cite{StateExplosion2,StateExplosion1}.
%, which refers to the exponential growth of AFSMs states' sets with the number of FSMs involved.  
This is the key reason for us to propose an alternative approach to check bisimulation equivalence of AFSMs %which does not need to expand AFSMs and
which is centered on the notion of \textit{compositional bisimulation} that is introduced hereafter.
\begin{definition}
\label{Hbis}
Given a pair of Arenas 
$\mathbb{A}^{j}=(\mathbb{V}^{j},\mathbb{E}^{j})$ of FSMs $M_{1}^{j}$, $M_{2}^{j}$, ..., $M_{N_{j}}^{j}$ ($j=1,2$), a set 
\mbox{$\mathbb{R}\subseteq \mathbb{V}^{1}\times \mathbb{V}^{2}$}, is a compositional bisimulation relation between $\mathbb{A}^{1}$ and $\mathbb{A}^{2}$ if for any $(M_{i}^{1},M_{j}^{2})\in \mathbb{R}$ the following conditions are satisfied:
\begin{itemize}
\item $M_{i}^{1}\cong M_{j}^{2}$;
\item existence of $(M_{i}^{1},M_{i^{\prime}}^{1})\in\mathbb{E}^{1}$ implies existence of $(M_{j}^{2},M_{j^{\prime}}^{2})\in\mathbb{E}^{2}$ such that $(M_{i^{\prime}}^{1},M_{j^{\prime}}^{2})\in \mathbb{R}$;
\item existence of $(M_{j}^{1},M_{j^{\prime}}^{2})\in\mathbb{E}^{2}$ implies existence of $(M_{i}^{1},M_{i^{\prime}}^{1})\in\mathbb{E}^{1}$ such that $(M_{i^{\prime}}^{1},M_{j^{\prime}}^{2})\in \mathbb{R}$.
\end{itemize}
The AFSMs $\mathbb{A}^{1}$ and $\mathbb{A}^{2}$ are compositionally bisimilar, denoted $\mathbb{A}^{1} \cong_{c} \mathbb{A}^{2}$, if there exists a total compositional bisimulation relation between $\mathbb{A}^{1}$ and $\mathbb{A}^{2}$.
\end{definition}
%It is readily seen that basic facts on bisimulation equivalence of FSMs recalled in Section \ref{Sec:FSM} can be adapted to compositional bisimulation of AFSMs, as follows. 
The notion of compositional bisimulation is an equivalence relation on the class of AFSMs. The maximal compositional bisimulation relation between AFSMs $\mathbb{A}^{1}$ and $\mathbb{A}^{2}$ is a compositional bisimulation relation $\mathbb{R}^{\ast}(\mathbb{A}^{1},\mathbb{A}^{2})$ such that $\mathbb{R}\subseteq \mathbb{R}^{\ast}(\mathbb{A}^{1},\mathbb{A}^{2})$ for any compositional bisimulation relation $\mathbb{R}$. The maximal compositional bisimulation exists and is unique. 
The set $\mathbb{R}^{\ast}(\mathbb{A},\mathbb{A})$ is an equivalence relation on the collection of FSMs in $\mathbb{A}$. 
The quotient
%\footnote{In the next section we show how quotients of AFSMs induced by compositional bisimulation can be computed as quotients of appropriate FSMs induced by ordinary bisimulation (Proposition \ref{gio}).}
 of $\mathbb{A}$ induced by $\mathbb{R}^{\ast}(\mathbb{A},\mathbb{A})$ is the minimal (in terms of the number of the FSMs involved) compositionally bisimilar AFSM of $\mathbb{A}$. The minimal AFSM of an AFSM $\mathbb{A}$, denoted $\mathbf{A}_{\min}(\mathbb{A})$, exists and is unique, up to isomorphisms. \\
Checking compositional bisimulation equivalence of AFSMs is equivalent to checking bisimulation equivalence of appropriate FSMs, as discussed hereafter. Consider a pair of AFSMs $\mathbb{A}^{j}=(\mathbb{V}^{j},\mathbb{E}^{j})$ ($j=1,2$). 
Since bisimulation is an equivalence relation on the set $\mathbb{V}_{1}\cup\mathbb{V}_{2}$ of FSMs, it induces a partition of $\mathbb{V}_{1}\cup\mathbb{V}_{2}$ in $K$ equivalence classes $C_{1},C_{2},...,C_{K}$ where $M_{i},M_{j}\in C_{k}$ if and only if $M_{i} \cong M_{j}$. Note that $\{C_{k}\}_{k\in K}$ is a finite set. Define the tuple: 
\begin{equation}
\label{FSMarena}
M_{\mathbb{A}^{j}}=(X_{\mathbb{A}^{j}},U_{\mathbb{A}^{j}},Y_{\mathbb{A}^{j}},H_{\mathbb{A}^{j}},\Delta_{\mathbb{A}^{j}}),
\end{equation}
where $X_{\mathbb{A}^{j}}=\mathbb{V}^{j}$, $U_{\mathbb{A}^{j}}=\varnothing$, $Y_{\mathbb{A}^{j}}=\{C_{k}\}_{k\in K}$, $H_{\mathbb{A}^{j}}:X_{\mathbb{A}^{j}}\rightarrow 2^{Y_{\mathbb{A}^{j}}}$ is defined by $H_{\mathbb{A}^{j}}(M_{i}^{j})=\{C_k\}$ if $M_{i}^{j}\in C_{k}$, and $\Delta_{\mathbb{A}^{j}}\subseteq X_{\mathbb{A}^{j}} \times \varnothing \times X_{\mathbb{A}^{j}}$ is such that 
$M_{i}^{j} \rTo_{\Delta_{\mathbb{A}^{j}}}^{\varnothing} M_{i^{\prime}}^{j}$ when $(M_{i}^{j},M_{i^{\prime}}^{j})\in\mathbb{E}^{j}$. By definition of $H_{\mathbb{A}^{j}}$, $H_{\mathbb{A}^{j}}(M_{i}^{j})=H_{\mathbb{A}^{j^{\prime}}}(M_{i^{\prime}}^{j^{\prime}})$ if and only if $M_{i}^{j} \cong M_{i^{\prime}}^{j^{\prime}}$.
%\begin{equation}
%\label{FSMarena}
%M_{\mathbb{A}^{j}}=(X_{\mathbb{A}^{j}},U_{\mathbb{A}^{j}},Y_{\mathbb{A}^{j}},H_{\mathbb{A}^{j}},\Delta_{\mathbb{A}^{j}}),
%\end{equation}
%where:
%\begin{itemize}
%\item $X_{\mathbb{A}^{j}}=\mathbb{V}^{j}$;
%\item $U_{\mathbb{A}^{j}}=\varnothing$;
%\item $Y_{\mathbb{A}^{j}}=\mathbb{V}^{1}\cup \mathbb{V}^{2}$;
%\item $H_{\mathbb{A}^{j}}:X_{\mathbb{A}^{j}}\rightarrow Y_{\mathbb{A}^{j}}$ is such that for any $M_{i}^{j},M_{i^{\prime}}^{j^{\prime}}\in \mathbb{V}^{1}\cup \mathbb{V}^{2}$, 
%$H_{\mathbb{A}^{j}}(M_{i}^{j})=H_{\mathbb{A}^{j^{\prime}}}(M_{i^{\prime}}^{j^{\prime}})$ if and only if $M_{i}^{j} \cong M_{i^{\prime}}^{j^{\prime}}$;
%\item $\Delta_{\mathbb{A}^{j}}\subseteq X_{\mathbb{A}^{j}} \times \{\varnothing\} \times X_{\mathbb{A}^{j}}$, such that 
%$M_{i}^{j} \rTo_{\Delta_{\mathbb{A}^{j}}}^{\varnothing} M_{i^{\prime}}^{j}$, if $(M_{i}^{j},M_{i^{\prime}}^{j})\in\mathbb{E}^{j}$.
%\end{itemize}
The syntax of the tuple in (\ref{FSMarena}) is the same as the one of FSMs from which, the following result holds.
\begin{proposition}
\label{gio}
%Consider a pair of AFSMs $\mathbb{A}^{1}$ and $\mathbb{A}^{2}$. Then 
$\mathbb{A}^{1} \cong_{c} \mathbb{A}^{2}$ if and only if $M_{\mathbb{A}^{1}} \cong M_{\mathbb{A}^{2}}$. 
\end{proposition}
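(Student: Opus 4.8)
The plan is to show that the two notions of equivalence are built from the very same relations, so that they coincide. Concretely, I would prove the sharper statement that a relation $\mathbb{R}\subseteq\mathbb{V}^{1}\times\mathbb{V}^{2}$ is a compositional bisimulation relation between $\mathbb{A}^{1}$ and $\mathbb{A}^{2}$ if and only if it is a bisimulation relation between the FSMs $M_{\mathbb{A}^{1}}$ and $M_{\mathbb{A}^{2}}$. This suffices: since the initial state is skipped from the tuple in (\ref{FSMarena}), bisimilarity $M_{\mathbb{A}^{1}}\cong M_{\mathbb{A}^{2}}$ means, by Definition \ref{Bis}, that there exists a \emph{total} bisimulation relation between them, while $\mathbb{A}^{1}\cong_{c}\mathbb{A}^{2}$ means, by Definition \ref{Hbis}, that there exists a total compositional bisimulation relation; and the two domains coincide because $X_{\mathbb{A}^{j}}=\mathbb{V}^{j}$.

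For the sharper statement I would match the three defining conditions one at a time, on a fixed pair $(M_{i}^{1},M_{j}^{2})\in\mathbb{R}$. First, condition (i) of Definition \ref{Bis} reads $H_{\mathbb{A}^{1}}(M_{i}^{1})=H_{\mathbb{A}^{2}}(M_{j}^{2})$, which by the displayed property of $H_{\mathbb{A}^{j}}$ (namely $H_{\mathbb{A}^{1}}(M_{i}^{1})=H_{\mathbb{A}^{2}}(M_{j}^{2})$ if and only if $M_{i}^{1}\cong M_{j}^{2}$) is exactly the first condition of Definition \ref{Hbis}. Second, by construction of $\Delta_{\mathbb{A}^{j}}$ the transitions of $M_{\mathbb{A}^{j}}$ are in bijection with the edges of $\mathbb{E}^{j}$: a transition $M_{i}^{j}\rTo^{\varnothing}_{\Delta_{\mathbb{A}^{j}}}M_{i^{\prime}}^{j}$ exists precisely when $(M_{i}^{j},M_{i^{\prime}}^{j})\in\mathbb{E}^{j}$. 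Since $U_{\mathbb{A}^{j}}=\varnothing$, the only admissible input label is $\varnothing$, so the label-matching requirement $u_{1}=u_{2}$ in conditions (ii) and (iii) is vacuously satisfied. Hence condition (ii) --- existence of $M_{i}^{1}\rTo^{\varnothing}_{\Delta_{\mathbb{A}^{1}}}M_{i^{\prime}}^{1}$ forcing existence of $M_{j}^{2}\rTo^{\varnothing}_{\Delta_{\mathbb{A}^{2}}}M_{j^{\prime}}^{2}$ with $(M_{i^{\prime}}^{1},M_{j^{\prime}}^{2})\in\mathbb{R}$ --- translates verbatim into the second condition of Definition \ref{Hbis}, and symmetrically condition (iii) translates into the third.

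Putting the three equivalences together shows that $\mathbb{R}$ satisfies the clauses of Definition \ref{Hbis} on $(M_{i}^{1},M_{j}^{2})$ exactly when it satisfies the clauses of Definition \ref{Bis} on the same pair, for every pair in $\mathbb{R}$; the two notions of relation therefore coincide, and so do the associated total-relation equivalences. I do not expect a genuine obstacle here: the construction in (\ref{FSMarena}) was engineered so that outputs encode bisimilarity classes of FSMs and transitions encode the communication edges. The only points demanding care are (a) invoking the stated ``if and only if'' property of $H_{\mathbb{A}^{j}}$ to convert output equality into bisimilarity of the component FSMs, and (b) checking that the totality clauses line up --- both equivalences quantify over total relations on the same underlying set $\mathbb{V}^{1}\times\mathbb{V}^{2}$ --- so that the ``there exists a total relation such that $\dots$'' statements match on the nose.
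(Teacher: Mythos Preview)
Your proposal is correct and follows essentially the same approach as the paper: the paper's proof simply asserts that, by comparing Definitions \ref{Bis} and \ref{Hbis}, $\mathbb{A}^{1}\cong_{c}\mathbb{A}^{2}$ if and only if $\mathbb{R}^{\ast}(\mathbb{A}^{1},\mathbb{A}^{2})$ is a total bisimulation relation between $M_{\mathbb{A}^{1}}$ and $M_{\mathbb{A}^{2}}$, which is precisely the correspondence you spell out in detail. Your version is more explicit (and in fact establishes the slightly sharper claim that the two classes of relations coincide, not just the maximal ones), but the underlying idea is identical.
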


\begin{proof}
By Definitions \ref{Bis} and \ref{Hbis}, it is readily seen that $\mathbb{A}^{1} \cong_{c} \mathbb{A}^{2}$ if and only if the set 
$\mathbb{R}^{\ast}(\mathbb{A}^{1},\mathbb{A}^{2})$ is a total bisimulation relation between $M_{\mathbb{A}^{1}}$ and $M_{\mathbb{A}^{2}}$. 
\end{proof}

%The proof of the above result is a direct consequence of the definition of $M_{\mathbb{A}^{j}}$, Definition \ref{Bis} and Definition \ref{Hbis}; it is therefore omitted. 
We are now ready to present the main result of the paper, that shows that the notion of compositional bisimulation of AFSMs is consistent with the notion of bisimulation of the corresponding expanded FSMs. 
\begin{theorem}\label{prop}
If $\mathbb{A}^{1} \cong_{c} \mathbb{A}^{2}$ then $\mathbb{M}(\mathbb{A}^{1}) \cong \mathbb{M}(\mathbb{A}^{2})$. 
\end{theorem}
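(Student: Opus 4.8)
The plan is to exhibit an explicit bisimulation relation between the expanded machines $\mathbb{M}(\mathbb{A}^{1})$ and $\mathbb{M}(\mathbb{A}^{2})$, obtained by lifting a total compositional bisimulation relation $\mathbb{R}$ (which exists because $\mathbb{A}^{1}\cong_{c}\mathbb{A}^{2}$) together with the local bisimulations it carries. For every pair $(M_{i}^{1},M_{j}^{2})\in\mathbb{R}$ we have $M_{i}^{1}\cong M_{j}^{2}$, so I fix the maximal bisimulation $R^{\ast}(M_{i}^{1},M_{j}^{2})$ between them. Writing states of $\mathbb{M}(\mathbb{A}^{1})$ as $x=(x_{1},\dots,x_{N_{1}})$ and of $\mathbb{M}(\mathbb{A}^{2})$ as $y=(y_{1},\dots,y_{N_{2}})$, I define $(x,y)\in R$ if and only if $(x_{i},y_{j})\in R^{\ast}(M_{i}^{1},M_{j}^{2})$ for every $(M_{i}^{1},M_{j}^{2})\in\mathbb{R}$, and then verify conditions (i)--(iv) of Definition \ref{Bis}.

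The easy parts come first. Since each $M_{i}^{1}\cong M_{j}^{2}$ relates initial states, the initial tuple $x^{0}$ is related to $y^{0}$, which gives (iv). For the output condition (i), the output of $x$ in $\mathbb{M}(\mathbb{A}^{1})$ is $\bigcup_{i}H_{i}^{1}(x_{i})$ and that of $y$ in $\mathbb{M}(\mathbb{A}^{2})$ is $\bigcup_{j}H_{j}^{2}(y_{j})$; because $\mathbb{R}$ is total, every component $M_{i}^{1}$ is matched to some $M_{j}^{2}$ with $(x_{i},y_{j})\in R^{\ast}(M_{i}^{1},M_{j}^{2})$, whence $H_{i}^{1}(x_{i})=H_{j}^{2}(y_{j})$ by clause (i) of the local bisimulation. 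Running this over all components in both directions yields equality of the two unions.

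The substance is in the transition clauses (ii)--(iii). Given a transition $x\rTo^{u}x'$ of $\mathbb{M}(\mathbb{A}^{1})$, it arises from local moves $x_{i}\rTo^{u_{i}}_{\Delta_{i}^{1}}x_{i}'$ with $u$ specified by (\ref{u}). To match it I must produce, for every component $M_{j}^{2}$ of $\mathbb{A}^{2}$, a local transition $y_{j}\rTo^{v_{j}}_{\Delta_{j}^{2}}y_{j}'$: totality of $\mathbb{R}$ guarantees each $M_{j}^{2}$ is related to some $M_{i}^{1}$, and clause (ii) of $R^{\ast}(M_{i}^{1},M_{j}^{2})$ then supplies $y_{j}\rTo^{u_{i}}_{\Delta_{j}^{2}}y_{j}'$ with $v_{j}=u_{i}$ and $(x_{i}',y_{j}')\in R^{\ast}(M_{i}^{1},M_{j}^{2})$. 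Assembling these local moves yields a transition $y\rTo^{u'}y'$; I must then verify that $(x',y')\in R$ (which follows by re-running the componentwise output argument on the targets) and, crucially, that $u'=u$.

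The hard part will be exactly the equality $u'=u$. The semantics (\ref{u}) masks each local input $u_{i}$ by the union $\bigcup_{M_{k}^{1}\in\Pre(\mathbb{A}^{1},M_{i}^{1})}H_{k}^{1}(x_{k})$ of the outputs of its predecessors in the communication graph, so preservation of the global input forces these masking sets to agree across the two arenas. Two features must be combined. First, since $\mathbb{R}$-related components carry equal outputs, the union of predecessor outputs depends only on the bisimulation classes of the predecessors present, so several predecessors lying in a single class contribute one and the same set and are absorbed; this is what allows the differing numbers $N_{1}\neq N_{2}$ and the many-to-many nature of $\mathbb{R}$ to cause no harm. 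Second, to pair the predecessors of $M_{i}^{1}$ with those of $M_{j}^{2}$ I must invoke the edge-matching clauses of Definition \ref{Hbis}, equivalently the bisimulation $M_{\mathbb{A}^{1}}\cong M_{\mathbb{A}^{2}}$ of Proposition \ref{gio}, transported onto the incoming edges that define $\Pre$. Making this transport precise, and simultaneously checking that the independent local choices made for components sharing a class are mutually consistent so that the assembled $u'$ coincides with $u$ symbol by symbol, is the delicate step; once it is settled, clause (iii) is obtained identically with the roles of $\mathbb{A}^{1}$ and $\mathbb{A}^{2}$ interchanged.
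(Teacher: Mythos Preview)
Your plan is essentially the paper's proof: lift the total compositional bisimulation (the paper uses the maximal one $\mathbb{R}^{\ast}(\mathbb{A}^{1},\mathbb{A}^{2})$) together with the local maximal bisimulations $R^{\ast}(M_{i}^{1},M_{j}^{2})$ to a relation $R$ on product states, and verify conditions (i)--(iv) of Definition~\ref{Bis} exactly as you describe.

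One slip worth correcting: the fact that $(x',y')\in R$ does \emph{not} follow from ``re-running the componentwise output argument on the targets.'' What gives it is the local bisimulation clause you already invoked, which produces $(x_{i}',y_{j}')\in R^{\ast}(M_{i}^{1},M_{j}^{2})$ for each matched pair; the consistency issue you correctly flag at the end is precisely what is required to make this hold for \emph{every} $(M_{i}^{1},M_{j}^{2})\in\mathbb{R}$, not just the one pair per $j$ used in the construction. The paper's argument for $u^{1}=u^{2}$ proceeds just as you anticipate, combining $u_{i}^{1}=u_{j}^{2}$, the local output equalities $H_{i'}^{1}(x_{i'}^{1})=H_{j'}^{2}(x_{j'}^{2})$ for $\mathbb{R}$-related predecessors, and totality of $\mathbb{R}$.
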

\begin{proof}
Let be $\mathbb{A}^{j}=(\mathbb{V}^{j},\mathbb{E}^{j})$ with $\mathbb{V}^{j}=\{M_{1}^{j},M_{2}^{j},...,$ $M_{N_{j}}^{j}\}$ and 
$M_{i}^{j}=(X_{i}^{j},x_{i}^{0,j},U_{i}^{j},Y_{i}^{j},H_{i}^{j},\Delta_{i}^{j})$ ($i=1,2,...,N_{j}$, $j=1,2$). 
%We denote the FSMS obtained by expanding the corresponding AFSMs $\mathbb{A}^{j}$ by 
Set 
$ 
%\[
\mathbb{M}(\mathbb{A}^{j})=(X^{j},x^{0,j},U^{j},Y^{j},H^{j},$ $\Delta^{j}) 
%\]
$ ($j=1,2$). Since $\mathbb{A}^{1} \cong_{c} \mathbb{A}^{2}$, relation $\mathbb{R}^{\ast}(\mathbb{A}^{1},\mathbb{A}^{2})$ is total. 
Consider the relation $R\subseteq X^{1}\times X^{2}$ defined by $(x^{1},x^{2})\in R$ with $x^{1}=(x_{1}^{1},x_{2}^{1},...,x_{N_{1}}^{1})$ and $x^{2}=(x^{2}_{1},x^{2}_{2},...,x^{2}_{N_{2}})$ if and only if $(x_{i}^{1},x_{j}^{2})\in R^{\ast}(M_{i}^{1},M_{j}^{2})$ and $(M_{i}^{1},M^{2}_{j})\in \mathbb{R}^{\ast}(\mathbb{A}^{1},\mathbb{A}^{2})$. Consider %a pair of states 
$
%\[
(x^{1},x^{2})=((x_{1}^{1},x_{2}^{1},...,x_{N_{1}}^{1}),(x^{2}_{1},x^{2}_{2},...,x^{2}_{N_{2}}))\in R 
%\]
$. 
By definition of $R$, $H_{i}^{1}(x^{1}_{i})=H_{j}^{2}(x^{2}_{j})=\bigcup_{k\in I(i)}H_{k}^{2}(x^{2}_{k})$ for any $i=1,2,...,N_{1}$ and $j\in I(i)=\{k\in \{1,2,...,N_{2}\}\,|\, (M_{i}^{1},M^{2}_{k})\in \mathbb{R}^{\ast}(\mathbb{A}^{1},\mathbb{A}^{2})\}$. Hence one gets:
\begin{equation}
\label{qa1}
\begin{array}
{rcl}
H^{1}(x^{1}) & = & \bigcup_{M_{i}\in \mathbb{V}^{1}}H^{1}_{i}(x^{1}_{i})=\bigcup_{M_{i}\in \mathbb{V}^{1}}(\bigcup_{k\in I(i)}H^{2}_{k}(x^{2}_{k}))\nonumber\\
& = & \bigcup_{M_{j}\in \mathbb{V}^{2}} H^{2}_{j}(x^{2}_{j})=H^{2}(x^{2}).
\end{array}
\end{equation}
Note that the third equality in the above chain of equalities holds because $\mathbb{R}^{\ast}(\mathbb{A}^{1},\mathbb{A}^{2})$ is total.
Hence, condition (i) in Definition \ref{Bis} is satisfied. Consider any transition $
%\[
(x_{1}^{1},x_{2}^{1},...,x_{N_{1}}^{1})\rTo^{u^{1}} (z_{1}^{1},z_{2}^{1},...,z_{N_{1}}^{1})
%\]
$ in $\mathbb{M}(\mathbb{A}^{1})$. By definition of $\mathbb{M}(\mathbb{A}^{1})$ there exist transitions $x_{i}^{1}\rTo_{\Delta_{i}^{1}}^{u_{i}^{1}} z_{i}^{1}$ of $M_{i}^{1}$ ($i=1,2,...,N_{1}$), such that:
\begin{equation}
u^{1}=\bigcup_{M_{i}\in \mathbb{V}^{1}} (u^{1}_{i} \backslash (\bigcup_{M_{i^{\prime}}\in \Pre(\mathbb{A}^{1},M_{i})}H_{i^{\prime}}^{1}(x_{i^{\prime}}^{1}))).
\label{fff}
\end{equation}
By definition of $R$, for any $i=1,2,...,N_{1}$ there exist transitions $x_{j}^{2}\rTo_{\Delta_{j}^{2}}^{u_{j}^{2}} z_{j}^{2}$ of $M_{j}^{2}$, with $(M_{i}^{1},M_{j}^{2})\in\mathbb{R}^{\ast}(\mathbb{A}^{1},\mathbb{A}^{2})$ and
\begin{eqnarray}
&& (z_{i}^{1},z_{j}^{2})\in R^{\ast}(M_{i}^{1},M_{j}^{2}),\label{aaa}\\
&& u_{i}^{1}=u_{j}^{2},\label{ccc}\\
&& H_{i}^{1}(x_{i}^{1})=H_{j}^{2}(x_{j}^{2}).\label{cccc}
\end{eqnarray}
Set:
\begin{equation}
u^{2}=\bigcup_{M_{j}\in \mathbb{V}^{2}} (u^{2}_{j} \backslash (\bigcup_{M_{j^{\prime}}\in \Pre(\mathbb{A}^{2},M_{j})}H_{j^{\prime}}^{2}(x_{j^{\prime}}^{2})),
\label{ffff}
\end{equation}
and consider the transition $
%\begin{equation}
(x_{1}^{2},x_{2}^{2},...,x_{N_{2}}^{2})\rTo^{u^{2}}$ $(z_{1}^{2},z_{2}^{2},...,z_{N_{2}}^{2})%.\nonumber
%\label{TrAnS}
%\end{equation}
$ in $\mathbb{M}(\mathbb{A}^{2})$. 
%By construction, the above transition is in $\mathbb{M}(\mathbb{A}^{2})$. 
By definition of the relation $R$ and by combining (\ref{fff}), (\ref{ccc}), (\ref{cccc}) and (\ref{ffff}), one gets $u^{1}=u^{2}$ from which, together with condition (\ref{aaa}), one gets $
%\[
((z_{1}^{1},z_{2}^{1},...,z_{N_{1}}^{1}),(z^{2}_{1},z^{2}_{2},...,z^{2}_{N_{2}}))\in R
%\]
$. 
Thus, condition (ii) in Definition \ref{Bis} is proved. Condition (iii) can be proven by using similar arguments. Finally condition (iv) holds by definition of $R$. 
\end{proof}

%The above result is important because it provides a method to assess bisimulation equivalence of AFSMs $\mathbb{A}^{i}$ without expanding them to the corresponding FSMs $\mathbb{M}(\mathbb{A}^{i})$. 
The converse implication of the above result, i.e. whether $\mathbb{M}(\mathbb{A}^{1}) \cong \mathbb{M}(\mathbb{A}^{2})$ implies $\mathbb{A}^{1} \cong_{c} \mathbb{A}^{2}$, is not true in general, as shown in the following counterexample. 
\begin{example}
Consider four FSMs $M_{i}=(X_{i},X^{0}_{i},U_{i},$ $Y_{i},H_{i},\Delta_{i})$, where each $M_{i}$ is characterized by the unique transition $x_{i}^{0}\rTo_{\Delta_{i}}^{u_{i}}x_{i}^{+}$, where:
\[
\begin{tabular}
[c]{l|cccc}
\hline
 & $M_{1}$ & $M_{2}$ & $M_{3}$ & $M_{4}$\\
\hline
$u_{i}$ & $\{a\}$ & $\{c\}$ & $\{b,d\}$ & $\{a,d\}$ \\
$H_{i}(x_{i}^{0})$ & $\{b\}$ & $\{d\}$ & $\{e\}$ & $\{b,e\}$\\
$H_{i}(x_{i}^{+})$ & $\{f\}$ & $\{f\}$ & $\{f\}$ & $\{f\}$\\
\hline
\end{tabular}
\]
Consider a pair of AFSMs $\mathbb{A}^{1}=(\mathbb{V}^{1},\mathbb{E}^{1})$ and $\mathbb{A}^{2}=(\mathbb{V}^{2},\mathbb{E}^{2})$, depicted in Figure \ref{figcounterexample}, 
\begin{figure}[t]
\centering
\begin{tikzpicture}[scale=1,->,
shorten >=0.1pt,%
auto,node distance=1.65cm,thick,
inner sep=0.1 pt ,bend angle=30]
\tikzstyle{every state}=[minimum size=6mm,shape=rectangle]
\node[state, thin] (q_1) {$M_1$};
\node[state, thin] (q_3) [ right of=q_1] {$M_3$};
\node[state, thin] (q_2) [ right of=q_3] {$M_2$};
\node[state, thin] (q_4) [ right of=q_2] {$M_2$};
\node[state, thin] (q_5) [ right of=q_4] {$M_4$};
\path[->, thin] 
(q_1) 
edge node [swap] {} (q_3)
(q_2) 
edge node {} (q_3)
(q_4) 
edge node {} (q_5);
\end{tikzpicture}
\caption[]{AFSM $\mathbb{A}_{1}$ in the left and AFSM $\mathbb{A}_{2}$ in the right.}
\label{figcounterexample}
\end{figure}
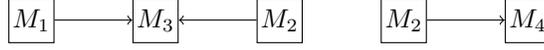
where:
\[
\begin{array}
{ll}
\mathbb{V}^{1}=\{M_{1},M_{2},M_{3}\}, & \mathbb{E}^{1}=\{(M_{1},M_{3}),(M_{2},M_{3})\},\\
\mathbb{V}^{2}=\{M_{2},M_{4}\}, & \mathbb{E}^{2}=\{(M_{2},M_{4})\}. 
\end{array}
\]
% $\mathbb{V}^{1}=\{M_{1},M_{2},M_{3}\}$, $\mathbb{E}^{1}=\{(M_{1},M_{3}),$ $(M_{2},M_{3})\}$, $\mathbb{V}^{2}=\{M_{2},M_{4}\}$ and $\mathbb{E}^{2}=\{(M_{2},M_{4})\}$. 
It is easy to see that FSM $\mathbb{M}(\mathbb{A}^{1})$ is composed by the unique transition:
\[
(x_{1}^{0},x_{2}^{0},x_{3}^{0}) \rTo^{\{a,c\}} (x_{1}^{+},x_{2}^{+},x_{3}^{+}),
\]
with output function $H^{1}$ defined by $H^{1}(x_{1}^{0},x_{2}^{0},x_{3}^{0})=\{b,d,e\}$ and $H^{1}(x_{1}^{+},x_{2}^{+},x_{3}^{+})=\{f\}$. Moreover, FSM $\mathbb{M}(\mathbb{A}^{2})$ is characterized by the unique transition:
\[
(x_{2}^{0},x_{4}^{0}) \rTo^{\{a,c\}} (x_{2}^{+},x_{4}^{+}),
\]
with output function $H^{2}$ defined by $H^{2}(x_{2}^{0},x_{4}^{0})=\{b,d,e\}$ and $H^{2}(x_{2}^{+},x_{4}^{+})=\{f\}$. Hence, FSMs $\mathbb{M}(\mathbb{A}^{1})$ and $\mathbb{M}(\mathbb{A}^{2})$ are bisimilar. On the other hand, it is easy to see that FSM $M_{4}$ is bisimilar to no FSM $M_{i}$, $i=1,2,3$. Hence, $\mathbb{A}^{1}$ and $\mathbb{A}^{2}$ are not compositionally bisimilar.
\end{example}

Theorem \ref{prop} can be used to reduce the size of AFSMs through compositional bisimulation, as follows. 
%Given $\mathbb{A}$, we recall that $\mathbf{M}_{\min}(\mathbb{M}(\mathbb{A}))$ denotes the minimal bisimilar FSM of $\mathbb{M}(\mathbb{A})$ and $\mathbf{A}_{\min}(\mathbb{A})$ denotes the minimal compositionally bisimilar AFSM of $\mathbb{A}$.
\begin{corollary}
\label{coro}
$\mathbf{M}_{\min}(\mathbb{M}(\mathbb{A})) \cong^{\iso} \mathbf{M}_{\min}(\mathbb{M}(\mathbf{A}_{\min}(\mathbb{A})))$.
\end{corollary}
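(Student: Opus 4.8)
The plan is to obtain the corollary as a formal consequence of Theorem \ref{prop} together with the universal properties of the two quotient constructions, so that essentially no new computation is required. First I would record the fact already built into the definition of $\mathbf{A}_{\min}(\mathbb{A})$: since this AFSM is the quotient of $\mathbb{A}$ induced by the maximal compositional bisimulation $\mathbb{R}^{\ast}(\mathbb{A},\mathbb{A})$, and the latter is itself a total compositional bisimulation relation, the canonical surjection of $\mathbb{A}$ onto its quotient induces a total compositional bisimulation between them. Hence $\mathbb{A} \cong_{c} \mathbf{A}_{\min}(\mathbb{A})$.

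Next I would push this equivalence down to the expanded level by a single application of Theorem \ref{prop}. From $\mathbb{A} \cong_{c} \mathbf{A}_{\min}(\mathbb{A})$ the theorem yields
\[
\mathbb{M}(\mathbb{A}) \cong \mathbb{M}(\mathbf{A}_{\min}(\mathbb{A})),
\]
so the two expanded FSMs are bisimilar, even though their state spaces differ dramatically in size (this is precisely where the complexity saving comes from).

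The final step uses that every FSM is bisimilar to its own minimal quotient and that this minimal quotient is unique up to isomorphism. Because bisimulation is an equivalence relation, the chain $\mathbf{M}_{\min}(\mathbb{M}(\mathbb{A})) \cong \mathbb{M}(\mathbb{A}) \cong \mathbb{M}(\mathbf{A}_{\min}(\mathbb{A})) \cong \mathbf{M}_{\min}(\mathbb{M}(\mathbf{A}_{\min}(\mathbb{A})))$ gives, by transitivity, that $\mathbf{M}_{\min}(\mathbb{M}(\mathbb{A}))$ and $\mathbf{M}_{\min}(\mathbb{M}(\mathbf{A}_{\min}(\mathbb{A})))$ are bisimilar. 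Since both are minimal FSMs, this bisimilarity is upgraded to an isomorphism by Lemma \ref{lemma} (equivalently, by uniqueness of the minimal FSM up to isomorphism), which is exactly the asserted relation $\cong^{\iso}$.

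I expect the only delicate point to be the very first step: one must be certain that passing to the quotient $\mathbf{A}_{\min}(\mathbb{A})$ genuinely preserves compositional bisimilarity with $\mathbb{A}$, rather than merely producing a smaller AFSM that happens to have equivalent constituent FSMs. This is what guarantees that Theorem \ref{prop} is applicable, and it rests on $\mathbb{R}^{\ast}(\mathbb{A},\mathbb{A})$ being a bona fide (maximal) compositional bisimulation. Once that is in place, everything downstream is purely formal: a single invocation of Theorem \ref{prop}, transitivity of bisimulation, and the uniqueness up to isomorphism of minimal FSMs recorded in Lemma \ref{lemma}.
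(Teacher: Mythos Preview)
Your proposal is correct and follows essentially the same route as the paper: establish $\mathbb{A}\cong_{c}\mathbf{A}_{\min}(\mathbb{A})$, apply Theorem \ref{prop} to get $\mathbb{M}(\mathbb{A})\cong\mathbb{M}(\mathbf{A}_{\min}(\mathbb{A}))$, chain this with the bisimilarity of each FSM to its minimal quotient, and then invoke Lemma \ref{lemma} to upgrade bisimilarity of the two minimal FSMs to isomorphism. The only difference is that you spell out more carefully why $\mathbb{A}\cong_{c}\mathbf{A}_{\min}(\mathbb{A})$ holds, whereas the paper simply asserts it.
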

\begin{proof}
By definition of $\mathbf{M}_{\min}$, $\mathbf{M}_{\min}(\mathbb{M}(\mathbb{A})) \cong \mathbb{M}(\mathbb{A})$ and $\mathbb{M}(\mathbf{A}_{\min}(\mathbb{A}))\cong \mathbf{M}_{\min}(\mathbb{M}(\mathbf{A}_{\min}(\mathbb{A})))$. Since $\mathbb{A}\cong_{c} \mathbf{A}_{\min}(\mathbb{A})$, by Theorem \ref{prop}, $\mathbb{M}(\mathbb{A}) \cong \mathbb{M}(\mathbf{A}_{\min}(\mathbb{A}))$.  
Hence,
\[
\mathbf{M}_{\min}(\mathbb{M}(\mathbb{A})) \cong \mathbb{M}(\mathbb{A}) \cong \mathbb{M}(\mathbf{A}_{\min}(\mathbb{A})) 
\cong \mathbf{M}_{\min}(\mathbb{M}(\mathbf{A}_{\min}(\mathbb{A})))
\]
that, by transitivity implies 
%Hence:
%\[
%\begin{array}
%{rcl}
%\mathbf{M}_{\min}(\mathbb{M}(\mathbb{A})) & \cong & \mathbb{M}(\mathbb{A}) \cong \mathbb{M}(\mathbf{A}_{\min}(\mathbb{A})) \nonumber\\
%& \cong & \mathbf{M}_{\min}(\mathbb{M}(\mathbf{A}_{\min}(\mathbb{A}))).
%\end{array}
%\]
%By the transitivity property of bisimulation equivalence, the above chain of bisimulation equivalences implies 
%\[
$\mathbf{M}_{\min}(\mathbb{M}(\mathbb{A})) \cong \mathbf{M}_{\min}(\mathbb{M}(\mathbf{A}_{\min}(\mathbb{A})))$ 
%\]
which, by Lemma \ref{lemma}, concludes the proof.
\end{proof}

The above result suggests a method to use compositional bisimulation for complexity reduction of AFSMs, as summarized in the following algorithm:
\begin{itemize}
\item Compute the relation $\mathbb{R}^{\ast}(\mathbb{A},\mathbb{A})$.
\item Compute the quotient $\mathbf{A}_{\min}(\mathbb{A})$.
\item Expand the AFSM $\mathbf{A}_{\min}(\mathbb{A})$ to the FSM $\mathbb{M}(\mathbf{A}_{\min}(\mathbb{A}))$.
\item Compute the relation $R^{\ast}(\mathbb{M}(\mathbf{A}_{\min}(\mathbb{A})),\mathbb{M}(\mathbf{A}_{\min}(\mathbb{A})))$.
\item Compute the quotient $\mathbf{M}_{\min}(\mathbb{M}(\mathbf{A}_{\min}(\mathbb{A})))$. % of $\mathbb{M}(\mathbf{A}_{\min}(\mathbb{A}))$.
\end{itemize}
%The benefits from the use of this algorithm are quantified in the next section.
%, through a computational complexity analysis, and illustrated in Section 6 through an example in the context of systems biology.

\section{Complexity analysis}

In this section we compare computational complexity in checking compositional bisimulation equivalence between AFSMs and bisimulation equivalence between the corresponding expanded FSMs. Consider a pair of AFSMs $\mathbb{A}^{i}=(\mathbb{V}^{i},\mathbb{E}^{i})$ composed of $N_{i}$ FSMs and set $\mathbb{M}(\mathbb{A}^{i})=(X^{i},x^{0}_{i},U^{i},Y^{i},H^{i},\Delta^{i})$ ($i=1,2$). As common practice in the analysis of non--flat systems, e.g. \cite{StateExplosion2,StateExplosion1}, in the sequel we evaluate how computational complexity scales with the number $N_{i}$ of FSMs in AFSMs $\mathbb{A}^{i}$. 
%We start by noting that $|X^{i}|\sim 2^{N_{i}}$, $|\Delta^{i}|\sim 2^{2N_{i}}\sim 2^{N_{i}}$ and that the inequality $\frac{1}{2^{N_{1}}}+\frac{1}{2^{N_{2}}}\leq 1$ (which is true for any $N_{1},N_{2}\in\mathbb{N}$) implies $2^{N_{1}}+2^{N_{2}}\leq 2^{N_{1}+N_{2}}$ from which, the following results hold as a direct application of Propositions \ref{prop21} and \ref{prop22}. 
We start by evaluating the computational complexity in checking bisimulation equivalence of the flattened systems $\mathbb{M}(\mathbb{A}^{1})$ and $\mathbb{M}(\mathbb{A}^{2})$. As a direct application of Propositions \ref{prop21} and \ref{prop22}, one gets the following results.

\begin{corollary}
\label{prop23}
Space complexity in checking $\mathbb{M}(\mathbb{A}^{1})\cong \mathbb{M}(\mathbb{A}^{2})$ is $O(2^{N_{1}}+2^{N_{2}})$.
\end{corollary}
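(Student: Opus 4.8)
The plan is to reduce the statement to a direct application of Proposition~\ref{prop21} to the expanded FSMs $\mathbb{M}(\mathbb{A}^{1})$ and $\mathbb{M}(\mathbb{A}^{2})$, and then to estimate the sizes of their state sets and transition relations as functions of the numbers $N_{1},N_{2}$ of component FSMs. By Proposition~\ref{prop21}, the space complexity of checking $\mathbb{M}(\mathbb{A}^{1})\cong\mathbb{M}(\mathbb{A}^{2})$ is $O(|X^{1}|+|\Delta^{1}|+|X^{2}|+|\Delta^{2}|)$, so everything comes down to bounding each of these four quantities in terms of $N_{i}$.

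First I would bound the number of states. By construction of $\mathbb{M}(\mathbb{A}^{i})$ the state set is the full Cartesian product $X^{i}=X_{1}^{i}\times\cdots\times X_{N_{i}}^{i}$, whence $|X^{i}|=\prod_{k=1}^{N_{i}}|X_{k}^{i}|$. Following the convention adopted throughout the complexity analysis --- namely that the sizes of the individual component FSMs are treated as bounded constants and one measures how complexity scales with the number $N_{i}$ of FSMs --- this product is exponential in $N_{i}$ and is absorbed into $O(2^{N_{i}})$.

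Next I would bound the transition relation. By the construction of $\Delta^{i}$ through (\ref{ttrans})--(\ref{u}), each $N_{i}$-tuple of component transitions (one per FSM) induces exactly one transition of $\mathbb{M}(\mathbb{A}^{i})$, so that $|\Delta^{i}|\le\prod_{k=1}^{N_{i}}|\Delta_{k}^{i}|$, which under the same convention is again $O(2^{N_{i}})$. Substituting these bounds into the estimate furnished by Proposition~\ref{prop21}, and noting that the dominant contribution for each arena is $O(2^{N_{i}})$, yields the claimed bound $O(2^{N_{1}}+2^{N_{2}})$.

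The only genuinely delicate point is the bookkeeping of the exponential base: writing the product bounds as $2^{N_{i}}$ rather than as $c^{N_{i}}$ for a constant $c$ depending on the maximal per-FSM state and transition count is a matter of the stated convention (constant-size components, scaling measured in $N_{i}$), and I would make this convention explicit to avoid ambiguity. Everything else is routine substitution into Proposition~\ref{prop21}.
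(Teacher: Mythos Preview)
Your proposal is correct and matches the paper's approach: the paper simply states this corollary as a direct application of Proposition~\ref{prop21} (and Proposition~\ref{prop22} for the companion time-complexity result), without spelling out the product bounds on $|X^{i}|$ and $|\Delta^{i}|$ that you make explicit. Your added remark about the convention on the exponential base is a welcome clarification that the paper leaves implicit.
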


\begin{corollary}
\label{prop24}
Time complexity in checking $\mathbb{M}(\mathbb{A}^{1})\cong \mathbb{M}(\mathbb{A}^{2})$ is $O((2^{N_{1}}+2^{N_{2}})\ln(2^{N_{1}}+2^{N_{2}}))$. 
\end{corollary}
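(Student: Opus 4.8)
The plan is to obtain the claimed bound as a direct instance of Proposition \ref{prop22} applied to the flattened machines $M_{1}=\mathbb{M}(\mathbb{A}^{1})$ and $M_{2}=\mathbb{M}(\mathbb{A}^{2})$, after substituting exponential estimates on the sizes of their state and transition sets. Concretely, Proposition \ref{prop22} yields that the time complexity in checking $\mathbb{M}(\mathbb{A}^{1})\cong \mathbb{M}(\mathbb{A}^{2})$ is $O((|\Delta^{1}|+|\Delta^{2}|)\ln(|X^{1}|+|X^{2}|))$, so the whole argument reduces to controlling $|X^{i}|$ and $|\Delta^{i}|$ in terms of $N_{i}$.

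First I would record the two size estimates, treating the sizes of the individual component FSMs as constants, as is common practice in this type of analysis. By construction $X^{i}=X_{1}^{i}\times X_{2}^{i}\times\cdots\times X_{N_{i}}^{i}$, so that $|X^{i}|=\prod_{k=1}^{N_{i}}|X_{k}^{i}|=O(2^{N_{i}})$. For the transition relation, the proposition establishing uniqueness of $\mathbb{M}(\mathbb{A})$ shows that each tuple of component transitions $x_{k}^{i}\to^{u_{k}} (x_{k}^{i})'$ ($k=1,\dots,N_{i}$) gives rise to exactly one transition of $\mathbb{M}(\mathbb{A}^{i})$, with label fixed by (\ref{u}). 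Hence $\Delta^{i}$ is in bijection with $\Delta_{1}^{i}\times\cdots\times\Delta_{N_{i}}^{i}$, whence $|\Delta^{i}|=\prod_{k=1}^{N_{i}}|\Delta_{k}^{i}|=O(2^{N_{i}})$. These are exactly the estimates already exploited in Corollary \ref{prop23}.

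Substituting into the formula from Proposition \ref{prop22}, the multiplicative factor satisfies $|\Delta^{1}|+|\Delta^{2}|=O(2^{N_{1}}+2^{N_{2}})$, while the logarithmic factor satisfies $\ln(|X^{1}|+|X^{2}|)=O(\ln(2^{N_{1}}+2^{N_{2}}))$. Multiplying the two estimates then yields the asserted bound $O((2^{N_{1}}+2^{N_{2}})\ln(2^{N_{1}}+2^{N_{2}}))$.

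The only genuinely delicate point is the behaviour of the implied constants inside the logarithm: replacing $|X^{i}|$ by its $O(2^{N_{i}})$ bound introduces a multiplicative constant $c$, and one must check that $\ln(c\,(2^{N_{1}}+2^{N_{2}}))=\ln c+\ln(2^{N_{1}}+2^{N_{2}})$ is absorbed into $O(\ln(2^{N_{1}}+2^{N_{2}}))$. This holds because $\ln(2^{N_{i}})=N_{i}\ln 2$ grows without bound, so the additive constant $\ln c$ is dominated. Everything else is a routine substitution, so I expect this constant-tracking inside the logarithm, together with the product-counting of $\Delta^{i}$, to be the main (though mild) obstacle.
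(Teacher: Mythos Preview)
Your proposal is correct and follows exactly the paper's approach: the paper states that the corollary is ``a direct application of Propositions \ref{prop21} and \ref{prop22}'' and gives no further detail, so your explicit substitution of the exponential bounds $|X^{i}|,|\Delta^{i}|=O(2^{N_{i}})$ into Proposition \ref{prop22} is precisely the intended argument, just spelled out more carefully than the paper does.
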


The above result quantifies the aforementioned state explosion problem \cite{StateExplosion2,StateExplosion1} in the class of AFSMs. We now discuss computational complexity in checking compositional bisimulation. 
\begin{proposition}
\label{prop25}
Space complexity in checking $\mathbb{A}^{1}\cong_{c}\mathbb{A}^{2}$ is $O(N_{1}^{2}+N_{2}^{2})$.
\end{proposition}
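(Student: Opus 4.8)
The plan is to reduce the problem to a single bisimulation check between the auxiliary FSMs $M_{\mathbb{A}^{1}}$ and $M_{\mathbb{A}^{2}}$ defined in (\ref{FSMarena}), and then invoke the space bound of Proposition \ref{prop21}. By Proposition \ref{gio}, $\mathbb{A}^{1} \cong_{c} \mathbb{A}^{2}$ holds if and only if $M_{\mathbb{A}^{1}} \cong M_{\mathbb{A}^{2}}$, so it suffices to account for the space used in two phases: (a) building the two FSMs $M_{\mathbb{A}^{j}}$, and (b) running the standard bisimulation algorithm on them. Throughout, as is standard in the complexity analysis of non--flat systems, I would treat the individual component FSMs $M_{i}^{j}$ as having fixed (constant) size and measure only the scaling in $N_{j}$.

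First I would analyze the size of $M_{\mathbb{A}^{j}}$. The state set is $X_{\mathbb{A}^{j}}=\mathbb{V}^{j}$, so $|X_{\mathbb{A}^{j}}|=N_{j}$. The transition relation $\Delta_{\mathbb{A}^{j}}$ is in bijection with the edge set $\mathbb{E}^{j}$; since self--loops are excluded, $|\Delta_{\mathbb{A}^{j}}|=|\mathbb{E}^{j}|\leq N_{j}(N_{j}-1)=O(N_{j}^{2})$. The output alphabet $Y_{\mathbb{A}^{j}}=\{C_{k}\}_{k\in K}$ is the set of bisimulation equivalence classes of the component FSMs, hence $|Y_{\mathbb{A}^{j}}|\leq N_{1}+N_{2}$, and the output map $H_{\mathbb{A}^{j}}$ assigns a singleton label to each state, requiring $O(N_{j})$ space. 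Thus storing $M_{\mathbb{A}^{j}}$ costs $O(N_{j}^{2})$ space, the quadratic term arising from the transition relation.

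Next I would account for the preprocessing needed to determine the labels $\{C_{k}\}$, i.e. the partition of $\mathbb{V}^{1}\cup\mathbb{V}^{2}$ into bisimulation classes. Since each component FSM has constant size, each pairwise check $M_{i}\cong M_{j}$ costs $O(1)$ space by Proposition \ref{prop21}, and these checks can be carried out sequentially, reusing the same workspace; the only persistent data is the class label of each of the $N_{1}+N_{2}$ FSMs, which occupies $O(N_{1}+N_{2})$ space. Finally, applying Proposition \ref{prop21} to $M_{\mathbb{A}^{1}}$ and $M_{\mathbb{A}^{2}}$ yields a checking cost of $O(|X_{\mathbb{A}^{1}}|+|\Delta_{\mathbb{A}^{1}}|+|X_{\mathbb{A}^{2}}|+|\Delta_{\mathbb{A}^{2}}|)=O(N_{1}+N_{1}^{2}+N_{2}+N_{2}^{2})=O(N_{1}^{2}+N_{2}^{2})$. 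Summing the three contributions, and noting $N_{i}\leq N_{i}^{2}$, the total is dominated by $O(N_{1}^{2}+N_{2}^{2})$.

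The hard part will not be the final counting but making the reduction honest: one must verify that the construction of $M_{\mathbb{A}^{j}}$ --- and in particular the computation of the \emph{shared} equivalence--class labels, so that $H_{\mathbb{A}^{j}}(M_{i}^{j})=H_{\mathbb{A}^{j'}}(M_{i'}^{j'})$ exactly when $M_{i}^{j}\cong M_{i'}^{j'}$ --- can be carried out and stored within the claimed budget. The decisive point is that $\Delta_{\mathbb{A}^{j}}$ is counted by the edge set $\mathbb{E}^{j}$ rather than by the exponentially many transitions of the expanded FSM $\mathbb{M}(\mathbb{A}^{j})$; this is precisely what lets the quadratic, rather than exponential, bound go through, in contrast with Corollary \ref{prop23}.
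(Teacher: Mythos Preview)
Your argument is correct and follows exactly the same route as the paper, which simply states that the result is a direct consequence of Propositions \ref{prop21} and \ref{gio}. Your version merely spells out in detail the bookkeeping (the sizes $|X_{\mathbb{A}^{j}}|=N_{j}$, $|\Delta_{\mathbb{A}^{j}}|=|\mathbb{E}^{j}|=O(N_{j}^{2})$, and the constant--size preprocessing for the class labels) that the paper leaves implicit.
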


\begin{proof}
Direct consequence of Propositions \ref{prop21} and \ref{gio}. 
\end{proof}

\begin{proposition}
\label{prop26}
Time complexity in checking $\mathbb{A}^{1}\cong_{c} \mathbb{A}^{2}$ is $O((N_{1}^{2}+N_{2}^{2})\ln(N_{1}+N_{2}))$. 
\end{proposition}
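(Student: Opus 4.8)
The plan is to mirror the proof of Proposition \ref{prop25} for space complexity, replacing the appeal to Proposition \ref{prop21} with an appeal to Proposition \ref{prop22}. The starting point is Proposition \ref{gio}, which asserts that $\mathbb{A}^{1}\cong_{c}\mathbb{A}^{2}$ holds if and only if $M_{\mathbb{A}^{1}}\cong M_{\mathbb{A}^{2}}$, where $M_{\mathbb{A}^{j}}$ is the FSM defined in (\ref{FSMarena}). Thus checking compositional bisimulation of the two Arenas reduces to checking ordinary bisimulation of the pair $M_{\mathbb{A}^{1}}$, $M_{\mathbb{A}^{2}}$, and by Proposition \ref{prop22} the latter costs $O((|\Delta_{\mathbb{A}^{1}}|+|\Delta_{\mathbb{A}^{2}}|)\ln(|X_{\mathbb{A}^{1}}|+|X_{\mathbb{A}^{2}}|))$.

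The next step is to express the relevant cardinalities in terms of $N_{1}$ and $N_{2}$. By construction $X_{\mathbb{A}^{j}}=\mathbb{V}^{j}$, so $|X_{\mathbb{A}^{j}}|=N_{j}$, while $\Delta_{\mathbb{A}^{j}}$ contains exactly one transition for each edge of $\mathbb{E}^{j}$, so $|\Delta_{\mathbb{A}^{j}}|=|\mathbb{E}^{j}|$. Since $\mathbb{E}^{j}\subseteq\mathbb{V}^{j}\times\mathbb{V}^{j}$ and self-loops are excluded, $|\mathbb{E}^{j}|\le N_{j}(N_{j}-1)=O(N_{j}^{2})$. Substituting these bounds into the expression above yields $O((N_{1}^{2}+N_{2}^{2})\ln(N_{1}+N_{2}))$, as claimed.

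The argument is essentially a substitution, so there is no deep obstacle; the only point requiring care is to confirm that building the FSMs $M_{\mathbb{A}^{j}}$ does not dominate this bound. Forming $H_{\mathbb{A}^{j}}$ requires the partition of $\mathbb{V}^{1}\cup\mathbb{V}^{2}$ into the bisimulation classes $C_{1},\dots,C_{K}$, i.e.\ pairwise bisimulation checks among the component FSMs $M_{i}^{j}$. Consistently with the convention of this section, where complexity is measured as a function of the number of FSMs and each component FSM is of bounded size, every such check is $O(1)$ in $N_{1}$, $N_{2}$, and there are $O((N_{1}+N_{2})^{2})=O(N_{1}^{2}+N_{2}^{2})$ pairs. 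This preprocessing therefore contributes $O(N_{1}^{2}+N_{2}^{2})$, which is absorbed by the dominant bisimulation check and leaves the stated time complexity unchanged.
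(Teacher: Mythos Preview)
Your proof is correct and follows essentially the same approach as the paper: reduce via Proposition~\ref{gio} to checking $M_{\mathbb{A}^{1}}\cong M_{\mathbb{A}^{2}}$, apply Proposition~\ref{prop22} with $|X_{\mathbb{A}^{j}}|=N_{j}$ and $|\Delta_{\mathbb{A}^{j}}|=O(N_{j}^{2})$, and then observe that the $O((N_{1}+N_{2})^{2})$ cost of building $H_{\mathbb{A}^{j}}$ is dominated by the bisimulation check. If anything, you are slightly more explicit than the paper in justifying the cardinality bounds and the $O(1)$ cost per pairwise component check.
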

\begin{proof}
By Proposition \ref{gio}, checking $\mathbb{A}^{1}\cong_{c} \mathbb{A}^{2}$ reduces to: (1) construct FSMs $M_{\mathbb{A}^{1}}$ and $M_{\mathbb{A}^{2}}$ and, (2) check if $M_{\mathbb{A}^{1}}\cong M_{\mathbb{A}^{2}}$. 
Regarding (1), time complexity effort reduces to the one of defining functions $H_{\mathbb{A}^{1}}$ and $H_{\mathbb{A}^{2}}$ which amounts to 
$O((N_{1}+N_{2})^{2})$. 
Regarding (2), by Proposition \ref{prop22}, time complexity in checking $M_{\mathbb{A}^{1}} \cong M_{\mathbb{A}^{2}}$ is given by $
O((N_{1}^{2}+N_{2}^{2})\ln(N_{1}+N_{2}))$.
Since the last term is dominant over $O((N_{1}+N_{2})^{2})$, the result follows. 
\end{proof}

%By Corollaries \ref{prop23} and \ref{prop24} and Propositions \ref{prop25} and \ref{prop26} it is readily seen that the computational complexity in checking compositional bisimulation between AFSMs scales \textit{polynomially} with $N_{1}+N_{2}$, while the one in checking bisimulation between the corresponding FSMs scales \textit{exponentially} with $N_{1}+N_{2}$. 
%In \cite{AFSMrep} we applied AFSMs modeling paradigm to the regulation of gene expression of the single--celled bacterium \textit{E. coli}. An AFSM $\mathbb{A}$ has been presented which models the metabolism of glucose, galactose and arabinose. The expanded FSM $\mathbb{M}(\mathbb{A})$ is composed of $4,831,838,208$ states. By applying the results reported in this paper the minimal FSM $\mathbf{M}_{\min}(\mathbb{A})$ has been computed, resulting in $55,296$ states.

\section{Regulation of gene expression in \textit{E. coli}}
\begin{figure}[t] 
\label{EColi}
\begin{center}
\includegraphics[scale=0.32]{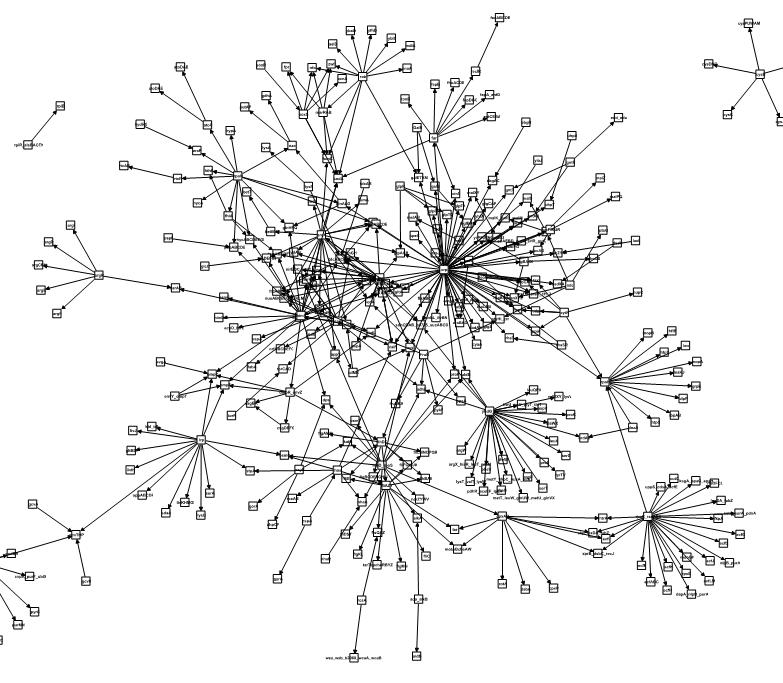}
\caption{A transcription network that represent about 10\% of the transcription interactions in the bacterium \textit{E. coli}.}
\end{center} 
\end{figure}
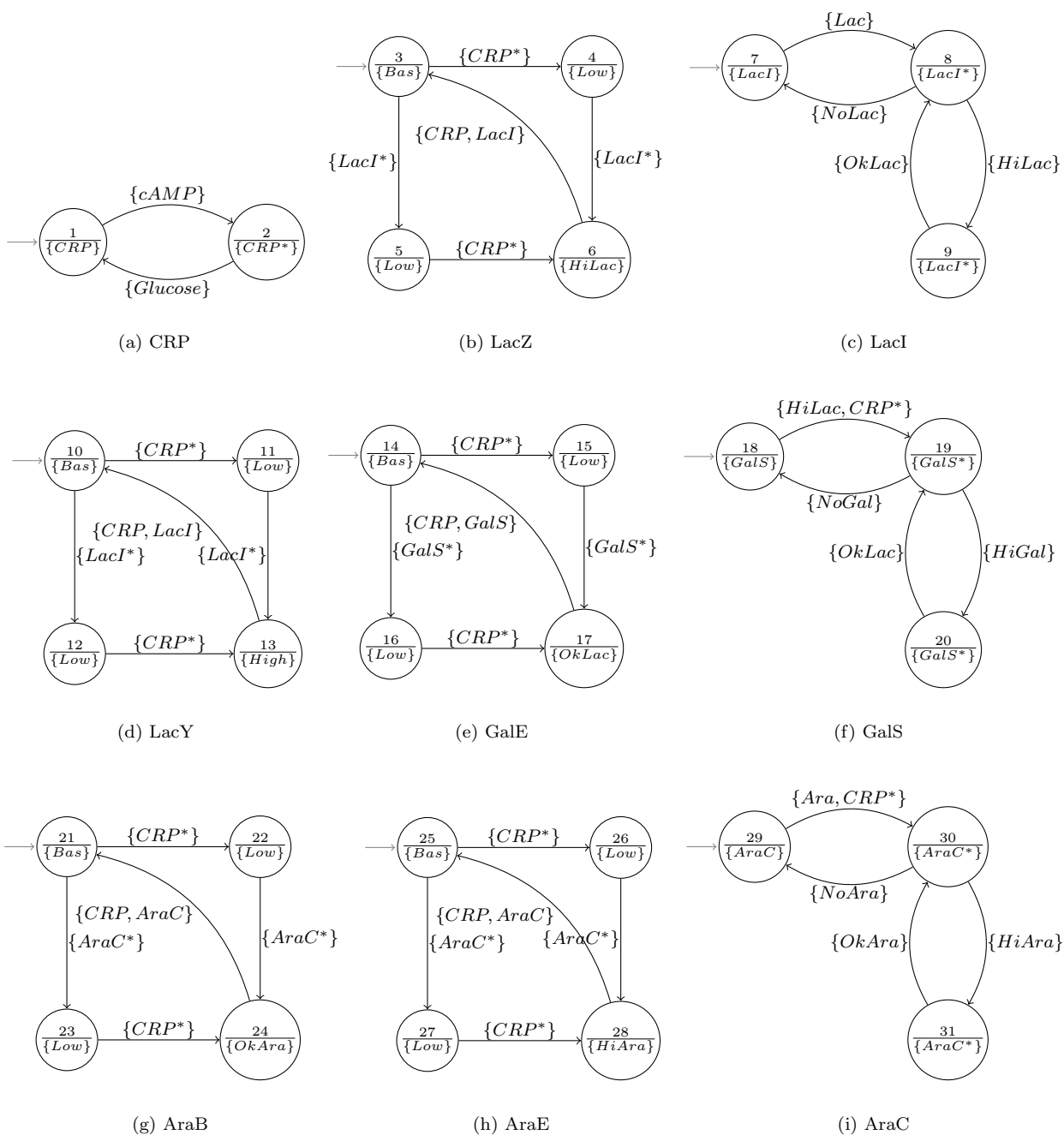
\begin{figure*}[t]
\centering
\subfigure[CRP]{
\begin{tikzpicture}[->,
shorten >=0.1pt,
auto,node distance=3cm,thick,
inner sep=0.1 pt ,bend angle=30]
\tikzstyle{every state}=[minimum size=8mm]
\tikzstyle{every node}=[font=\footnotesize]
\tikzstyle{initial}=[pin={[pin distance=5mm,pin edge={<-,shorten <=1pt}]left:$$}]
\node[state, thin, initial] (q_1) {$\frac{\text{  }1\text{  }}{\{CRP\}}$ };
\node[state, thin] (q_2) [ right of=q_1] {$\frac{\text{  }2\text{  }}{\{CRP^{\ast}\}}$};
\path[->, thin] 
(q_1) 
edge [bend left] node {$\{cAMP\}$} (q_2)
(q_2) 
edge [bend left] node {$\{Glucose\}$} (q_1);
\end{tikzpicture}
} 
\subfigure[LacZ]{
\begin{tikzpicture}[->,
shorten >=0.1pt,%
auto,node distance=3cm,thick,
inner sep=0.1 pt ,bend angle=30]
\tikzstyle{every state}=[minimum size=8mm]
\tikzstyle{every node}=[font=\footnotesize]
\tikzstyle{initial}=[pin={[pin distance=5mm,pin edge={<-,shorten <=1pt}]left:$$}]
\node[state, thin, initial] (q_6) {$\frac{\text{  }3\text{  }}{\{Bas\}}$};
\node[state, thin] (q_7) [ right of=q_6] {$
\frac{\text{  }4\text{  }}{\{Low\}}$};
\node[state, thin] (q_8) [below of=q_6] {$\frac{\text{  }5\text{  }}{\{Low\}}$};
\node[state, thin] (q_9) [ right of=q_8] {$\frac{\text{  }6\text{  }}{\{HiLac\}}$};
\path[->, thin] 
(q_6) 
edge  node {$\{CRP^{\ast}\}$} (q_7)
edge  node [swap] {$\{LacI^{\ast}\}$} (q_8)
(q_7) 
edge  node {$\{LacI^{\ast}\}$} (q_9)
(q_8) 
edge  node {$\{CRP^{\ast}\}$} (q_9)
(q_9) 
edge [bend right] node {$\{CRP,LacI\}$} (q_6);
\end{tikzpicture}
} 
\subfigure[LacI]{
\begin{tikzpicture}[->,
shorten >=0.1pt,%
auto,node distance=3cm,thick,
inner sep=0.1 pt ,bend angle=30]
\tikzstyle{every state}=[minimum size=8mm]
\tikzstyle{every node}=[font=\footnotesize]
\tikzstyle{initial}=[pin={[pin distance=5mm,pin edge={<-,shorten <=1pt}]left:$$}]
\node[state, thin, initial] (q_3) {$\frac{\text{  }7\text{  }}{\{LacI\}}$};
\node[state, thin] (q_4) [ right of=q_3] {$\frac{\text{  }8\text{  }}{\{LacI^{\ast}\}}$};
\node[state, thin] (q_5) [ below of=q_4] {$\frac{\text{  }9\text{  }}{\{LacI^{\ast}\}}$};
\path[->, thin] 
(q_3) 
edge [bend left] node {$\{Lac\}$} (q_4)
(q_4) 
edge [bend left] node {$\{NoLac\}$} (q_3)
edge [bend left] node {$\{HiLac\}$} (q_5)
(q_5) 
edge [bend left] node {$\{OkLac\}$} (q_4);
\end{tikzpicture}
}\\
\subfigure[LacY]{
\begin{tikzpicture}[->,
shorten >=0.1pt,%
auto,node distance=3cm,thick,
inner sep=0.1 pt ,bend angle=30]
\tikzstyle{every state}=[minimum size=8mm]
\tikzstyle{every node}=[font=\footnotesize]
\tikzstyle{initial}=[pin={[pin distance=5mm,pin edge={<-,shorten <=1pt}]left:$$}]
\node[state, thin, initial] (q_6) {$\frac{\text{  }10\text{  }}{\{Bas\}}$};
\node[state, thin] (q_7) [ right of=q_6] {$
\frac{\text{  }11\text{  }}{\{Low\}}$};
\node[state, thin] (q_8) [below of=q_6] {$\frac{\text{  }12\text{  }}{\{Low\}}$};
\node[state, thin] (q_9) [ right of=q_8] {$
\frac{\text{  }13\text{  }}{\{High\}}$};
\path[->, thin] 
(q_6) 
edge  node {$\{CRP^{\ast}\}$} (q_7)
edge  node {$\{LacI^{\ast}\}$} (q_8)
(q_7) 
edge  node [swap] {$\{LacI^{\ast}\}$} (q_9)
(q_8) 
edge  node {$\{CRP^{\ast}\}$} (q_9)
(q_9) 
edge [bend right] node {$\{CRP,LacI\}$} (q_6);
\end{tikzpicture}
} 
\subfigure[GalE]{
\begin{tikzpicture}[->,
shorten >=0.1pt,%
auto,node distance=3cm,thick,
inner sep=0.1 pt ,bend angle=30]
\tikzstyle{every state}=[minimum size=8mm]
\tikzstyle{every node}=[font=\footnotesize]
\tikzstyle{initial}=[pin={[pin distance=5mm,pin edge={<-,shorten <=1pt}]left:$$}]
\node[state, thin, initial] (q_6) {$\frac{\text{  }14\text{  }}{\{Bas\}}$};
\node[state, thin] (q_7) [ right of=q_6] {$
\frac{\text{  }15\text{  }}{\{Low\}}$};
\node[state, thin] (q_8) [below of=q_6] {$\frac{\text{  }16\text{  }}{\{Low\}}$};
\node[state, thin] (q_9) [ right of=q_8] {$
\frac{\text{  }17\text{  }}{\{OkLac\}}$};
\path[->, thin] 
(q_6) 
edge  node {$\{CRP^{\ast}\}$} (q_7)
edge  node {$\{GalS^{\ast}\}$} (q_8)
(q_7) 
edge  node {$\{GalS^{\ast}\}$} (q_9)
(q_8) 
edge  node {$\{CRP^{\ast}\}$} (q_9)
(q_9) 
edge [bend right] node {$\{CRP,GalS\}$} (q_6);
\end{tikzpicture}
}
\subfigure[GalS]{
\begin{tikzpicture}[->,
shorten >=0.1pt,%
auto,node distance=3cm,thick,
inner sep=0.1 pt ,bend angle=30]
\tikzstyle{every state}=[minimum size=8mm]
\tikzstyle{every node}=[font=\footnotesize]
\tikzstyle{initial}=[pin={[pin distance=5mm,pin edge={<-,shorten <=1pt}]left:$$}]
\node[state, thin, initial] (q_3) {$\frac{\text{  }18\text{  }}{\{GalS\}}$};
\node[state, thin] (q_4) [ right of=q_3] {$\frac{\text{  }19\text{  }}{\{GalS^{\ast}\}}$};
\node[state, thin] (q_5) [ below of=q_4] {$\frac{\text{  }20\text{  }}{\{GalS^{\ast}\}}$};
\path[->, thin] 
(q_3) 
edge [bend left] node {$\{HiLac,CRP^{\ast}\}$} (q_4)
(q_4) 
edge [bend left] node {$\{NoGal\}$} (q_3)
edge [bend left] node {$\{HiGal\}$} (q_5)
(q_5) 
edge [bend left] node {$\{OkLac\}$} (q_4);
\end{tikzpicture}
}\\
\subfigure[AraB]{
\begin{tikzpicture}[->,
shorten >=0.1pt,%
auto,node distance=3cm,thick,
inner sep=0.1 pt ,bend angle=30]
\tikzstyle{every state}=[minimum size=8mm]
\tikzstyle{every node}=[font=\footnotesize]
\tikzstyle{initial}=[pin={[pin distance=5mm,pin edge={<-,shorten <=1pt}]left:$$}]
\node[state, thin, initial] (q_6) {$\frac{\text{  }21\text{  }}{\{Bas\}}$};
\node[state, thin] (q_7) [ right of=q_6] {$
\frac{\text{  }22\text{  }}{\{Low\}}$};
\node[state, thin] (q_8) [below of=q_6] {$\frac{\text{  }23\text{  }}{\{Low\}}$};
\node[state, thin] (q_9) [ right of=q_8] {$
\frac{\text{  }24\text{  }}{\{OkAra\}}$};
\path[->, thin] 
(q_6) 
edge  node {$\{CRP^{\ast}\}$} (q_7)
edge  node {$\{AraC^{\ast}\}$} (q_8)
(q_7) 
edge  node {$\{AraC^{\ast}\}$} (q_9)
(q_8) 
edge  node {$\{CRP^{\ast}\}$} (q_9)
(q_9) 
edge [bend right] node {$\{CRP,AraC\}$} (q_6);
%edge [bend left] node [swap] {$\{CRP,AraC\}$} (q_6);
\end{tikzpicture}
} 
\subfigure[AraE]{
\begin{tikzpicture}[->,
shorten >=0.1pt,%
auto,node distance=3cm,thick,
inner sep=0.1 pt ,bend angle=30]
\tikzstyle{every state}=[minimum size=8mm]
\tikzstyle{every node}=[font=\footnotesize]
\tikzstyle{initial}=[pin={[pin distance=5mm,pin edge={<-,shorten <=1pt}]left:$$}]
\node[state, thin, initial] (q_6) {$\frac{\text{  }25\text{  }}{\{Bas\}}$};
\node[state, thin] (q_7) [ right of=q_6] {$
\frac{\text{  }26\text{  }}{\{Low\}}$};
\node[state, thin] (q_8) [below of=q_6] {$\frac{\text{  }27\text{  }}{\{Low\}}$};
\node[state, thin] (q_9) [ right of=q_8] {$
\frac{\text{  }28\text{  }}{\{HiAra\}}$};
\path[->, thin] 
(q_6) 
edge  node {$\{CRP^{\ast}\}$} (q_7)
edge  node {$\{AraC^{\ast}\}$} (q_8)
(q_7) 
edge  node [swap] {$\{AraC^{\ast}\}$} (q_9)
(q_8) 
edge  node {$\{CRP^{\ast}\}$} (q_9)
(q_9) 
edge [bend right] node {$\{CRP,AraC\}$} (q_6);
\end{tikzpicture}
}
\subfigure[AraC]{
\begin{tikzpicture}[->,
shorten >=0.1pt,%
auto,node distance=3cm,thick,
inner sep=0.1 pt ,bend angle=30]
\tikzstyle{every state}=[minimum size=8mm]
\tikzstyle{every node}=[font=\footnotesize]
\tikzstyle{initial}=[pin={[pin distance=5mm,pin edge={<-,shorten <=1pt}]left:$$}]
\node[state, thin, initial] (q_3) {$\frac{\text{  }29\text{  }}{\{AraC\}}$};
\node[state, thin] (q_4) [ right of=q_3] {$\frac{\text{  }30\text{  }}{\{AraC^{\ast}\}}$};
\node[state, thin] (q_5) [ below of=q_4] {$\frac{\text{  }31\text{  }}{\{AraC^{\ast}\}}$};
\path[->, thin] 
(q_3) 
edge [bend left] node {$\{Ara,CRP^{\ast}\}$} (q_4)
(q_4) 
edge [bend left] node {$\{NoAra\}$} (q_3)
edge [bend left] node {$\{HiAra\}$} (q_5)
(q_5) 
edge [bend left] node {$\{OkAra\}$} (q_4);
\end{tikzpicture}
}
\caption{FSMs modeling proteins involved in the AFSM $\mathbb{A}$.}
\label{FSMss}
\end{figure*}
\begin{figure*}[t]
\label{BioAFSM}
\centering
\begin{tikzpicture}[scale=1,->,shorten >=0.1pt,auto,node distance=2.0cm,thick,inner sep=0.1 pt ,bend angle=30]
\tikzstyle{every state}=[minimum size=10mm,shape=rectangle]
\node[state, thin] (q_1) {LacZ};
\node[state, thin] (q_2) [right of=q_1] {LacY};
\node[state, thin] (q_10) [right of=q_2] {LacA};
\node[state, thin] (q_3) [right of=q_10] {AraE};
\node[state, thin] (q_11) [right of=q_3] {AraF};
\node[state, thin] (q_12) [right of=q_11] {AraG};
\node[state, thin] (q_4) [below of=q_1] {LacI};
\node[state, thin] (q_13) [right of=q_4] {GalT};
\node[state, thin] (q_5) [right of=q_13] {CPR};
\node[state, thin] (q_6) [below of=q_11] {AraC};
\node[state, thin] (q_14) [right of=q_6] {AraH};
\node[state, thin] (q_7) [below of=q_4] {GalS};
\node[state, thin] (q_8) [right of=q_7] {GalE};
\node[state, thin] (q_15) [right of=q_8] {GalK};
\node[state, thin] (q_9) [ right of=q_15] {AraB};
\node[state, thin] (q_16) [right of=q_9] {AraA};
\node[state, thin] (q_17) [ right of=q_16] {AraD};
\node[state, thin] (q_91) [right of=q_12] {LacZ};
\node[state, thin] (q_92) [right of=q_91] {LacY};
\node[state, thin] (q_93) [ right of=q_92] {AraE};
\node[state, thin] (q_94) [below of=q_91] {LacI};
\node[state, thin] (q_95) [right of=q_94] {CPR};
\node[state, thin] (q_96) [ right of=q_95] {AraC};
\node[state, thin] (q_97) [below of=q_94] {GalS};
\node[state, thin] (q_98) [right of=q_97] {GalE};
\node[state, thin] (q_99) [ right of=q_98] {AraB};
\path[->, thin] 
(q_1) 
edge [bend right] node {} (q_7)
edge [bend right] node {} (q_4)
(q_3) 
edge [bend right] node {} (q_6)
(q_4) 
edge [bend right] node {} (q_1)
edge node {} (q_2)
edge node {} (q_10)
(q_5) 
edge node {} (q_1)
edge node {} (q_2)
edge node {} (q_3)
edge node {} (q_6)
edge node {} (q_7)
edge node {} (q_9)
edge node {} (q_8)
edge node {} (q_10)
edge node {} (q_11)
edge node {} (q_12)
edge node {} (q_13)
edge node {} (q_15)
edge [bend left] node {} (q_14)
edge node {} (q_16)
edge node {} (q_17)
(q_6) 
edge [bend right] node {} (q_3)
edge [bend right] node {} (q_9)
edge [bend right] node {} (q_11)
edge [bend right] node {} (q_12)
edge [bend right] node {} (q_14)
edge [bend right] node {} (q_16)
edge [bend right] node {} (q_17)
(q_7) 
edge [bend right] node {} (q_8)
edge [bend right] node {} (q_13)
edge [bend right] node {} (q_15)
(q_8) 
edge [bend right] node {} (q_7)
edge node {} (q_4)
(q_9) 
edge [bend right] node {} (q_6)
(q_11) 
edge [bend right] node {} (q_6)
(q_12) 
edge [bend right] node {} (q_6)
(q_13) 
edge [bend right] node {} (q_7)
(q_14) 
edge [bend right] node {} (q_6)
(q_15) 
edge [bend right] node {} (q_7)
(q_16) 
edge [bend right] node {} (q_6)
(q_17) 
edge [bend right] node {} (q_6)
(q_91) 
edge [bend right] node {} (q_97)
edge [bend right] node {} (q_94)
(q_93) 
edge [bend right] node {} (q_96)
(q_94) 
edge [bend right] node {} (q_91)
edge node {} (q_92)
(q_95) 
edge node {} (q_91)
edge node {} (q_92)
edge node {} (q_93)
edge node {} (q_96)
edge node {} (q_97)
edge node {} (q_99)
edge node {} (q_98)
(q_96) 
edge [bend right] node {} (q_93)
edge [bend right] node {} (q_99)
(q_97) 
edge [bend right] node {} (q_98)
(q_98) 
edge [bend right] node {} (q_97)
edge node {} (q_94)
(q_99) 
edge [bend right] node {} (q_96);
\end{tikzpicture}
\caption{AFSM $\mathbb{A}$ in the left panel and the minimal AFSM $\mathbf{A}_{\min}(\mathbb{A})$ in the right panel.}
\end{figure*}
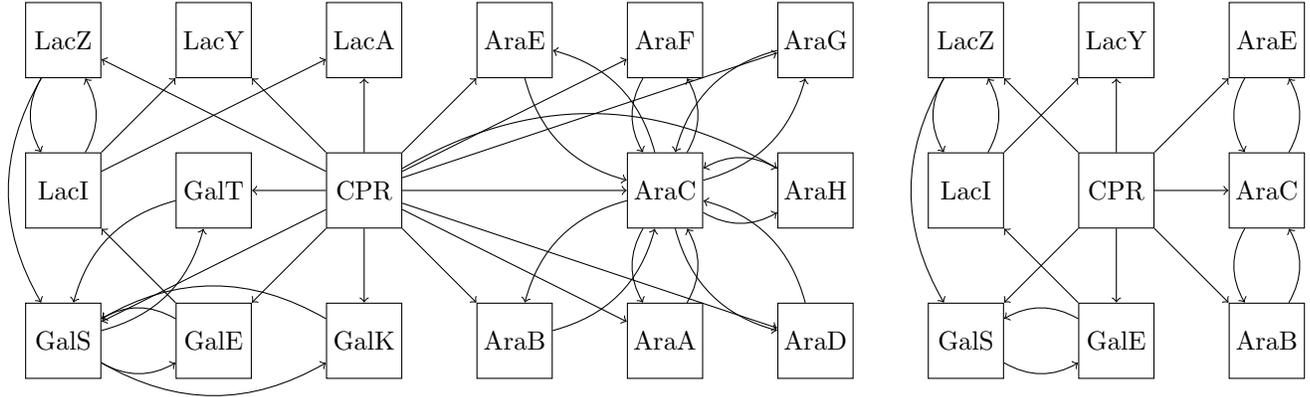
Several mathematical models have been proposed in the control systems' and computer science literature to model genetic regulatory systems, see e.g. \cite{BioSurvey}, and the references therein. We recall directed and undirected graphs, bayesan, boolean and generalized logical networks in the class of discrete systems, and nonlinear, piecewise--linear, qualitative, partial differential equations in the class of continuous systems. Stochastic hybrid systems have been proposed in \cite{Julius:08}. These models can be broadly classified along two orthogonal mathematical paradigms: (i) discrete, vs. continuous, vs. hybrid models; (ii) deterministic, vs. non--deterministic, vs. stochastic models. AFSMs fall in the class of discrete non--deterministic systems. An exhaustive comparison of AFSMs with the aforementioned models is out of the scope of this section. We only mention that the discrete systems proposed in the literature well capture the network of the genes but lack in modeling the dynamics of each gene. 
In the following we show that AFSMs are appropriate to describe both the genes' network and the dynamics of each gene. Moreover, we show that the notion of  compositional bisimulation can lead to a sensible reduction in the size of the proposed model. \\
We consider the genetic regulatory system of the single--celled bacterium Escherichia Coli (\textit{E. coli}). In the sequel we only report basic facts about this regulatory system; the interested reader can refer to e.g. \cite{BiocBook,BiocBook1,Alon} for more details. 
\textit{E. coli} is a single--celled bacterium with a few million of proteins. During its life \textit{E. coli} encounters situations in which production of proteins is required. 
Each protein is produced by its gene. Each gene is a double strand of the Deoxyribonucleic acid (DNA) which encodes the information needed for the production of a specific protein. The transcription of a gene is the process by which  Ribonucleic acid (RNA) polymerase produces messenger RNA (mRNA) corresponding to the sequence of genetic code. The mRNA is then translated into a protein that is known as gene product. 
The genes whose activity is controlled in response to the needs of the cell, are called regulated genes. Regulated genes require special proteins called effectors or inductors which implement a kind of induction in the target gene. These proteins can bind to DNA and promote RNA transcription. 
When extracellular stimuli are perceived, such effectors promote RNA transcription and thus protein translation, as requested. 
Figure 4 describes a transcription network, representing about 10\% of the transcription interactions in \textit{E. coli}. Each node represents a gene. An edge (X,Y) indicates that the transcription factor encoded in X regulates operon Y. \\
\textit{E. coli} grows in moist soil containing salts which include a source of nitrogen, and a carbon source as glucose. 
The energy needed for biochemical reactions in \textit{E. coli} is derived from the metabolism of glucose and other secondary sugars including lactose, galactose and arabinose. For simplicity, in the following we focus on (only) the metabolic regulation of lactose, galactose and arabinose, see e.g. 
\cite{BiocBook1,BiocBook,Alon}. %,Megerle08,Morgan06}. 
When lactose is the sole source of carbon in the soil, three proteins are synthesized, which are necessary to metabolize lactose: 
\begin{itemize}
\item $\beta$--galactosidase (LacZ). This enzyme catalyzes splitting of lactose into glucose and galactose and catalyzes isomerization of lactose to allolactose.
\item	Lactose permease (LacY). It is located in the cytoplasmic membrane of \textit{E. coli} and is needed for the active transport of lactose into the cell.
\item Transacetylase (LacA). This enzyme modifies toxic molecules of lactose to facilitate their removal from the cell. 
\end{itemize}
When glucose is present, on average, only three molecules of $\beta$--galactosidase are present in the cell. This is because the genes of the three proteins are repressed by a protein encoded by gene LacI. After entering into the cell, lactose is converted into allolactose through a biochemical reaction that is catalyzed by one of the few copies of $\beta$--galactosidase. Then, allolactose binds to repressor LacI and after its dissociation, genes  LacZ, LacY and LacA are expressed, thus producing a 1000--fold increase in the concentration of $\beta$--galactosidase. 
As already mentioned, for lactose to be metabolized, two conditions are needed: presence of lactose and absence of glucose. The latter is perceived by the cell via the cyclic adenosine monophosphate (cAMP) that is a molecule whose concentration is inversely proportional to that of glucose. This molecule acts as a coactivator in respect of an activator protein, called cAMP receptor protein (CRP). When glucose is absent, the cAMP--CRP complex binds to a specific site near the promoter of the genes for LacZ, LacY and LacA and increases 50 times the transcription of their mRNA. 
Metabolic regulation of lactose can be formalized as an AFSM (see Figure 6 (Left Panel)) which involves proteins CRP, LacI, LacZ, LacY and LacA. 
We start by describing the FSM modeling the protein complex CRP--cAMP (Figure 5(a)). Complex CRP--cAMP switches from the inactive state $1$ to the active state $2$ when input $\{cAMP\}$, signaling absence of glucose, is perceived. As soon as glucose is perceived by the cell, the FSM switches to the initial state $1$. 
Similarly we can represent the evolution of protein LacI. The corresponding FSM is depicted in Figure 5(c). It consists of three states: state $7$, modeling that the protein is disabled, state $8$, modeling activation of the protein, and state $9$ modeling high activation of the protein. 
Transcribed proteins LacZ and LacY are illustrated in Figures 5(b)(d). We do not report the FSM of LacA because the mechanism by which LacA reacts to external stimuli is the same as the one of LacY; hence, we assume $LacY \cong LacA$. 
If regulator proteins CRP and LacI are disabled, proteins LacZ and LacY are in their basal states $3$ and $10$, respectively. Both LacZ and LacY switch from states $3$ and $10$ to low transcription states $4$, $5$ and respectively $11$, $12$ if only one of proteins CRP and LacI are activated or equivalently, if either CRP is in state $2$ or LacI is in state $8$. Finally, LacZ and LacY switch to high transcription states $6$ and $13$ if both CRP and LacI are in states $2$ and $8$, respectively. Moreover, when LacZ is in state $6$, it induces a transition in LacI from state $8$ (modeling activation of protein LacI) to state $9$ (modeling high activation of protein LacI). \\
The regulatory mechanism for the production of proteins capable of recruiting galactose and arabinose is similar to that of lactose. In particular, the galactose system involves proteins GalS, GalE, GalT, GalK. Figures 5(e)(f) reports FSM modeling of GalE and GalS respectively. FSMs of proteins GalT and GalK are not reported because the mechanism by which proteins GalT and GalK react to external stimuli is the same as the one of GalE, from which we assume $GalE \cong GalT \cong GalK$. 
The arabinose system involves proteins AraA, AraB, AraC, AraD, AraE, AraF, AraG and AraK. 
Figures 5(g)(h)(i) reports FSM modeling of AraB, AraE and AraC respectively. FSMs of proteins AraA, AraC, AraD, AraF and AraG are not reported because the external behavior of proteins AraE, AraF, AraG and AraH can be considered as equivalent, which implies $AraE \cong AraF \cong Ara G \cong AraH$; similarly, the external behavior of proteins AraB, AraA and AraD can be considered as equivalent, which implies $AraB \cong AraA \cong AraD$. 
The obtained AFSM $\mathbb{A}$ is reported in Figure 6 (left panel). \\ 
We conclude this section by computing the minimal bisimilar FSM of $\mathbf{M}(\mathbb{A})$ through the algorithm illustrated in Section 4.1: 
\begin{itemize} 
\item The relation $\mathbb{R}^{\ast}(\mathbb{A},\mathbb{A})$ has been computed and the induced equivalence classes are: $\{AraC\}$, $\{CPR\}$, $\{GalS\}$, $\{LacZ\}$, $\{LacI\}$, $\{LacY,Lac A\}$, $\{GalE,GalT,GalK\}$, $\{AraA,AraB,AraD\}$ and $\{AraE,$ $AraF,AraG,AraH\}$. 
\item The quotient $\mathbf{A}_{\min}(\mathbb{A})$ has been computed and is illustrated in Figure 6 (right panel). 
\item By expanding $\mathbf{A}_{\min}(\mathbb{A})$ the FSM $\mathbf{M}(\mathbf{A}_{\min}(\mathbb{A}))$ is obtained, which consists of $55,296$ states. 
\item The relation $R^{\ast}(\mathbb{M}(\mathbf{A}_{\min}(\mathbb{A})),\mathbb{M}(\mathbf{A}_{\min}(\mathbb{A})))$ is the identity relation.
\item The quotient $\mathbf{M}_{\min}(\mathbb{M}(\mathbf{A}_{\min}(\mathbb{A})))$ coincides with $\mathbf{M}(\mathbf{A}_{\min}(\mathbb{A}))$.
\end{itemize}
The above computations required to run bisimulation algorithms on the collection of FSMs $M_{i}$ composing $\mathbb{A}$, whose sets of states sum up to $35$ states, and the FSM $M_{\mathbb{A}}$ induced by $\mathbb{A}$, whose states are $17$. A naive approach to compute $\mathbf{M}_{\min}(\mathbf{M}(\mathbb{A}))$, would apply bisimulation algorithms directly to $\mathbf{M}(\mathbb{A})$, which is composed of $4,831,838,208$ states. 

\section{Conclusion}
In this paper we introduced the class of Arenas of Finite State Machines. We also proposed the notion of compositional bisimulation that 
provides a method to assess bisimulation equivalence between AFSMs, without the need of expanding them to the corresponding FSMs and hence, without incurring in the state explosion problem. Future research direction is two--fold. From the theoretical point of view, we will focus on generalizations of the results presented here to non--flat systems exhibiting more general compositional features, as both parallel and sequential composition. From the point of view of the systems' biology application that we proposed, we will investigate the use of AFSMs for the formal analysis of such systems.

\textbf{Acknowledgement:} The authors thank Alberto Sangiovanni Vincentelli, Davide Pezzuti, Pasquale Palumbo and Letizia Giampietro for fruitful discussions on the topics of this paper.

\bibliographystyle{alpha}
\bibliography{biblio1}

\end{document}